\theoremstyle{plain}\newtheorem{claim}[thm]{Claim}
\newcommand{\NN}{\mathbb{N}}
\newcommand{\ZZ}{\mathbb{Z}}
\newcommand{\Lang}{\mathcal{L}}
\newcommand{\A}{\mathcal{A}}
\newcommand{\Si}{\Sigma}
\title{Construction of $\mu$-limit Sets}
\author[LAMA]{L. Boyer}{Laurent Boyer}
\author[LIF]{M. Delacourt}{Martin Delacourt}
\author[LATP]{M. Sablik}{Mathieu Sablik}
\address[LAMA]{LAMA, Université de Savoie}
\address[LIF]{Laboratoire d'Informatique Fondamentale de Marseille, Université de Provence}
\address[LATP]{Laboratoire d'Analyse, Topologie, Probabilités, Université de Provence}
\thanks{Thanks to the  project ANR EMC: ANR-09-BLAN-0164.}
\begin{document}

\ifpdf
\DeclareGraphicsExtensions{.pdf, .jpg, .tif}
\else
\DeclareGraphicsExtensions{.eps, .ps, .jpg}
\fi

\maketitle

\begin{abstract}\noindent
The $\mu$-limit set of a cellular automaton is a subshift whose forbidden patterns are exactly those, whose probabilities tend to zero as time tends to infinity. In this article, for a given subshift in a large class of subshifts, we propose the construction of a cellular automaton which realizes this subshift as $\mu$-limit set where $\mu$ is the uniform Bernoulli measure.
\end{abstract}

\section{Introduction} % (fold)

A cellular automaton (CA) is a complex system defined by a local rule which acts synchronously and uniformly on the configuration space. These simple models have a wide variety of different dynamical behaviors. More particularly it is interesting to understand its behavior when it goes to infinity. 

In the dynamical systems context, it is natural to study the limit set of a cellular automaton, it is defined as the set of configurations that can appear arbitrarily far in time. This set captures the longterm behavior of the CA and has been widely studied since the end of the 1980s. Given a cellular automaton, it is difficult to determine its limit set. Indeed it is undecidable to know if it contains only one configuration~\cite{Kari-1992} and more generally, every nontrivial property of limit sets is undecidable~\cite{Kari-1994}. Another problem is to characterize which subshift can be obtained as limit set of a cellular automaton. This was first studied in detail by Lyman Hurd~\cite{Hurd-1987}, and significant progress have been made~\cite{Maass-1995,Formenti-Kurka-2007} but there is still no characterization. The notion of limit set can be refined if we consider the notion of attractor~\cite{Hurley-1990-1,Kur}. 

However, these topological notions do not correspond to the empirical point of view where the initial configuration is chosen randomly, that is to say chosen according a measure $\mu$. That's why the notion of $\mu$-attractor is introduced by~\cite{Hurley-1990-2}. Like it is discussed in~\cite{KurMaa} with a lot of examples, this notion is not satisfactory empirically and the authors introduce the notion of $\mu$-limit set. A $\mu$-limit set is a subshift whose forbidden patterns are exactly those, whose probabilities tend to zero as time tends to infinity. This set corresponds to the configurations which are observed when a random configuration is iterated.

As for limit sets, it is difficult to determine the $\mu$-limit set of a given cellular automaton, indeed it is already undecidable to know if it contains only one configuration~\cite{Boyer-Poupet-Theyssier-2006}. However, in the literature, all $\mu$-limit sets which can be found are very simple (transitive subshifts of finite type). In this article, for every recursively enumerable family $(\Si_i)_{i\in\NN}$ of subshifts generated by a generic configuration, we construct a cellular automaton which realizes $\overline{\bigcup_{i\in\NN}\Si_i}$ as $\mu$-limit set. In particular all transitive sofic subshifts can be realized. It makes a strong difference with limit sets since there are sofic subshifts, as the even subshift (subshift on alphabet $\{0, 1\}$ in which all words $01^k0$ with odd $k$ are forbidden), which cannot be realized as limit set~\cite{Maass-1995}.

To construct a cellular automaton that realizes a given subshift as $\mu$-limit set, we first erase nearly all the information contained in a random configuration thanks to counters (section~\ref{sec:counters}). Then we produce segments, which are finite areas of computation. On each segment we construct small parts of the generic configurations of many subshifts, and as time passes, segments grow larger and every word of every subshift appears often enough (section \ref{sec:infinite_unions}).

\section{Definitions}

\subsection{Words and density}
For a finite set $Q$ called an \emph{alphabet}, denote $Q^{\ast}= \bigcup_{n\in\NN} Q^n$ the set of all finite words over $Q$. The \emph{length} of $u = u_0u_1\dots u_{n-1}$ is $|u| = n$. 
 We denote $Q^{\ZZ}$ the set of \emph{configurations} over $Q$, which are mappings from $\ZZ$ to $Q$, and  for $c\in Q^{\ZZ}$, we denote $c_z$ the image of $z\in \ZZ$ by $c$. For $u\in Q^{\ast}$ and $0\leq i\leq j\leq|u|-1$ we define the \emph{subword} $u_{[i,j]} =u_iu_{i+1}\dots u_j$; this definition can be extended to a configuration $c\in Q^{\ZZ}$ as $c_{[i,j]} =c_ic_{i+1}\dots c_j$ for $i,j\in\ZZ$ with $i\leq j$. The \emph{language} of $S\subset Q^{\ZZ}$ is defined by $$\Lang(S)=\{ u \in Q^{\ast} : \exists c\in S , \;\exists i \in\ZZ \textrm{ such that } u = c_{[ i , i + | u | - 1 ] }\}.$$ 

For every $u \in Q^{\ast}$ and $i\in\ZZ$, we define the \emph{cylinder} $[u]_i$ as the set of configurations containing the word $u$ in position $i$ that is to say $[u]_i = \{c\in Q^{\ZZ} : c_{[i,i+|u|-1]} = u\}$. If the cylinder is at the position $0$, we just denote it by $[u]$.

For all $u,v\in Q^{\ast}$ define $|u|_v$  the \emph{number of occurences} of $v$ in $u$ as:
$$|u|_v=\textrm{card}\{i\in[0,|u|-|v|] : u_{[i,i+|v|-1]}=v\}$$

For any two words $u,v \in Q^{\ast}$, let $d_u(v)=\frac{|u|_v}{|u|-|v|}$.\\

For a configuration $c\in Q^{\ZZ}$, the \emph{density} $d_c(v)$ of a finite word $v$ is:
\begin{displaymath}
d_c(v)=\limsup_{n\to +\infty} d_{c_{[-n,n]}}(v).
\end{displaymath}

These definitions could be generalized, for a set of words $W\subset Q^{\ast}$, we note $|u|_W$ and $d_c(W)$.

\begin{definition}[Normal configuration]
A configuration is said to be \emph{normal} for an alphabet $Q$ if all words of length $n$ have the same density of apparition in the configuration. 
\end{definition}

\subsection{Subshifts}

We denote by $\sigma$ the \emph{shift} map $\sigma:Q^{\ZZ}\mapsto Q^{\ZZ}$ defined by $\sigma(c)_i=c_{i-1}$. A \emph{subshift} is a closed, $\sigma$-invariant subset of $Q^{\ZZ}$. It is well known that a subshift is completely described by its language denoted $\Lang(\Si)$. Moreover, it is possible to define a subshift by a set of its forbidden words which do not appear in the language.

As the shift invariance is preserved, intersections and closures of unions of subshifts are still subshifts. And in particular, the union of a set $(\Lang(\Si_i))_i$ of languages describes the subshift that is the closure of the union of all subshifts: $\overline{\bigcup_{i\in\NN}\Si_i}$.

We define some classes of subshifts. A \emph{sofic subshift} is a subshift whose language of forbidden words is rational, i.e. given by a finite automaton. A  subshift $\Si$ is \emph{transitive} if for all $u,v\in\Lang(\Si)$ there exists a word $w$ such that $uwv\in\Lang(\Si)$. Let $s:Q\to Q^{\ast}$ be a primitive substitution (there exists $k\in\NN$ such that for all $a,b\in Q$ $a$ appears in $s^k(b)$), the \emph{substitutive subshift} associated to s is the subshift $\Si_s$ such that $$\Lang(\Si_s)=\{u\in Q^{\ast} :\exists \textrm{$a\in Q$ and $n\in\NN$ such that $u$ appears in $s^n(a)$}\}.$$

\subsection{Cellular automata}

\begin{definition} [Cellular automaton]
 A \emph{cellular automaton (CA)} $\A$ is a triple $(Q_{\A},r_{\A},\delta_{\A})$ where $Q_{\A}$ is a finite set of states called the \emph{alphabet}, $r_{\A}$ is the \emph{radius} of the automaton, and $\delta_{\A}:Q_{\A}^{2r_{\A}+1}\mapsto Q_{\A}$ is the \emph{local rule}.\\
 
 The configurations of a cellular automaton are the configurations over $Q_{\A}$.
 A global behavior is induced and we'll note $\A(c)$ the image of a configuration $c$ given by: $\forall z\in \ZZ, \A(c)_z=\delta_{\A}(c_{z-r_{\A}},\dots,c_z,\dots,c_{z+r_{\A}})$. Studying the dynamic of $\A$ is studying the iterations of a configuration by the map $\A:Q_{\A}^{\ZZ}\to Q_{\A}^{\ZZ}$. When there is no ambiguity, we'll note $Q$, $r$ and $\delta$ instead of $Q_{\A}$, $r_{\A}$ and $\delta_{\A}$.

\end{definition}

% section introduction (end)

\subsection{$\mu$-limit sets}\label{sec:mulimites}

\begin{definition}[Uniform Bernoulli measure]
For an alphabet $Q$, the \emph{uniform Bernoulli measure} $\mu$ on configurations over $Q$ is defined by: $\forall u\in Q^*, i\in \ZZ, \mu([u]_i)=\frac{1}{|Q|^{|u|}}$.
\end{definition}

For a CA $\A=(Q,r,\delta)$ and $u\in Q^*$, we denote for all $n\in \NN$, $\A^n\mu([u])=\mu\left(\A^{-n}([u])\right)$.

\begin{definition}[Persistent set]
For a CA $\A$, and the uniform Bernoulli measure $\mu$, we define the \emph{persistent set} $L_{\mu}(\A)$ with: $\forall u \in Q^*$:
 $$u\notin L_{\mu}(\A) \Longleftrightarrow \lim_{n\rightarrow\infty}\A^n\mu([u]_0)=0.$$
 
 Then the \emph{$\mu$-limit set} of $\A$ is $\Lambda_{\mu}(\A)=\left\{c\in Q^{\ZZ}:L(c)\subseteq L_{\mu}(\A) \right\}$.
\end{definition}

\begin{remark}
As this definition gives a set of forbidden finite words, we clearly see that $\mu$-limit sets are subshifts.
\end{remark}

\begin{definition}[Set of predecessors]
We define the set of predecessors at time $n$ of a finite word $u$ for a CA $\A=(Q,r,\delta)$ as $P^n_{\A}(u)=\left\{v\in Q^{|u|+2rn}: \A^n([v]_{-rn})\subseteq [u]_0\right\}$.
\end{definition}

\begin{remark}
 As we consider the uniform Bernoulli measure $\mu$, $\frac{|P^n_{\A}(u)|}{|Q|^{|u|+2rn}}\to 0$ $\Leftrightarrow$ $u\notin L_{\mu}(\A)$.
\end{remark}
\begin{remark}
 The set of normal configurations has measure $1$ in $Q^{\ZZ}$. Which means that a configuration that is randomly generated according to measure $\mu$ is a normal configuration.

\end{remark}

\begin{lemma}
\label{lem:normal}
Given a CA $\A$ and a finite word $u$, with $\mu$ the uniform Bernoulli measure, for any normal configuration $c$:\\
$u\in \Lambda_{\mu}(\A)$ $\Leftrightarrow$  $d_{\A^n(c)}(u)\nrightarrow 0$ when $n\to +\infty$.
\end{lemma}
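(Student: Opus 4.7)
My plan is to establish the stronger pointwise identity
\begin{equation*}
d_{\A^n(c)}(u) \;=\; \A^n\mu([u]_0)
\end{equation*}
for every fixed $n\in\NN$ and every normal configuration $c$. Once this identity is in hand, the lemma is immediate: the symbol ``$u\in\Lambda_\mu(\A)$'' should be read as $u\in\Lang(\Lambda_\mu(\A))$, which unfolds as $u\in L_\mu(\A)$, i.e., $\A^n\mu([u]_0)\not\to 0$; the identity then transfers this non-convergence to $d_{\A^n(c)}(u)$.

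The core combinatorial step is a bijection between occurrences of $u$ at time $n$ and occurrences of predecessor words at time $0$. Concretely, $\A^n(c)_{[i,\,i+|u|-1]}=u$ if and only if $c_{[i-rn,\,i+|u|-1+rn]}\in P^n_\A(u)$. Therefore, for any $N$ large enough, the occurrences of $u$ inside the window $\A^n(c)_{[-N,N]}$ are in exact bijection with the occurrences of predecessors $v\in P^n_\A(u)$ inside $c_{[-N-rn,\,N+rn]}$, the two counts being equal with no boundary correction (the allowed starting positions match up on the nose). This gives the equality
\begin{equation*}
\bigl|\A^n(c)_{[-N,N]}\bigr|_u \;=\; \sum_{v\in P^n_\A(u)} \bigl|c_{[-N-rn,\,N+rn]}\bigr|_v .
\end{equation*}

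I would then divide by the normalizing factor $2N+1-|u|$ used in the definition of $d_{\A^n(c)}(u)$; a direct check shows this is exactly the normalizing factor of $d_c(v)$ for a word of length $|u|+2rn$ inside a window of length $2N+2rn+1$. Invoking the normality of $c$ to replace each $d_c(v)$ by $1/|Q|^{|u|+2rn}$ and summing yields
\begin{equation*}
d_{\A^n(c)}(u) \;=\; \frac{|P^n_\A(u)|}{|Q|^{|u|+2rn}} \;=\; \A^n\mu([u]_0),
\end{equation*}
the last equality being the remark preceding the lemma.

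The proof is essentially a careful accounting of window sizes, so I do not expect a genuine obstacle. The one place to be attentive is precisely this bookkeeping: one must verify that the denominator $2N+1-|u|$ at time $n$ and the denominator $(2N+2rn+1)-(|u|+2rn)$ at time $0$ coincide, and that the starting positions of $u$-occurrences and of predecessor-occurrences match without boundary loss. Once this is checked, the equivalence in the lemma follows by passing from the equality $d_{\A^n(c)}(u)=\A^n\mu([u]_0)$ to the joint behavior of the two quantities as $n\to\infty$.
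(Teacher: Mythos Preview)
Your proposal is correct and follows essentially the same route as the paper: you prove the identity $d_{\A^n(c)}(u)=\A^n\mu([u]_0)$ by matching occurrences of $u$ at time $n$ with occurrences of its predecessors at time $0$, then invoke normality to evaluate the latter. Your explicit verification that the two normalizing denominators $2N+1-|u|$ and $(2N+2rn+1)-(|u|+2rn)$ coincide is exactly the bookkeeping the paper carries out, so the arguments are the same in substance.
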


\begin{proof}
Let $n\in\NN$, $r$ be the radius of $\A$ and $\mu$ the uniform measure. We prove here that:
$d_{\A^n(c)}(u)=\A^n\mu(u)=\frac{|P_{\A}^n(u)|}{|Q|^{|u|+2rn}}$.\\
The second part of the equality is obtained by definition of $\A^n\mu(u)$. We focus on the first part.
Since any occurence of $u$ in $\A^n(c)$ corresponds to an occurence of a predecessor of $u$ in $c$ :
\begin{displaymath}
d_{\A^n(c)}(u)=\limsup_{k\to +\infty} \frac{|\A^n(c)_{[-k,k]}|_u}{2k+1-|u|}=\limsup_{k\to +\infty} \sum_{v\in P^n_{\A}(u)}\frac{|c_{[-k-rn,k+rn]}|_v}{2k+2rn+1-(|u|+2rn)}.
\end{displaymath}
And as $c$ is normal, for any $v\in  P_{\A}^n(u)$ :
$
|c_{[-k-rn,k+rn]}|_v\sim_{k\to +\infty}\frac{2k+1}{|Q|^{|u|+2rn}}.
$\\
Then:
\begin{displaymath}
d_{\A^n(c)}(u)= \sum_{v\in P^n_{\A}(u)}\limsup_{k\to +\infty}\left(\frac{1}{2k+1-|u|}\frac{2k+1}{|Q|^{|u|+2rn}}\right)=\sum_{v\in P^n_{\A}(u)}\frac{1}{|Q|^{|u|+2rn}}=\frac{|P^n_{\A}(u)|}{|Q|^{|u|+2rn}}.
\end{displaymath}
\end{proof}

\begin{prop}\label{prop:pasteTwoWords}
Let $u\in L_{\mu}(\A)$, there exists a word $w$ such that $uwu\in L_{\mu}(\A)$.
\end{prop}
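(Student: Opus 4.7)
The plan is to transfer the hypothesis from persistence probabilities to densities via Lemma~\ref{lem:normal}, and then to exhibit $w$ by a counting argument on pairs of nearby occurrences of $u$ in an iterated normal configuration.

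Fix a normal configuration $c$ and write $c_n=\A^n(c)$. Lemma~\ref{lem:normal} converts $u\in L_\mu(\A)$ into $d_{c_n}(u)\nrightarrow 0$, so there are $\varepsilon>0$ and times $n_1<n_2<\cdots$ with $d_{c_{n_k}}(u)\geq \varepsilon$ for every $k$. Fix such a $k$; for arbitrarily large $N$, the set $S_N=\{i\in[-N,N-|u|+1]:(c_{n_k})_{[i,i+|u|-1]}=u\}$ satisfies $|S_N|\geq (\varepsilon/2)(2N+1-|u|)$.

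The core step is to count pairs. Partition $[-N,N]$ into consecutive blocks of length $M$ (to be fixed) and apply Cauchy--Schwarz to the block counts: $\sum_B|S_N\cap B|^2\geq |S_N|^2 M/(2N+1)$. Hence the number of ordered pairs $(i,j)\in S_N^2$ sharing a block, and thus satisfying $|j-i|\leq M$, is of order $M\varepsilon^2 N$. Subtracting the pairs with $|j-i|<|u|$, of which there are at most $2(|u|-1)|S_N|$, and choosing $M$ to be a sufficiently large multiple of $|u|/\varepsilon^2$, leaves at least $\alpha N$ ordered pairs $(i,j)\in S_N^2$ with $|u|\leq j-i\leq M$, for some constant $\alpha=\alpha(|u|,\varepsilon)>0$. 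Each such pair is an occurrence in $c_{n_k}$ of $u\,w\,u$ with $w=(c_{n_k})_{[i+|u|,j-1]}$ of length at most $M-|u|$.

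The set of candidate pairs $(j-i,w)$ has bounded cardinality at most $(M-|u|+1)|Q|^{M-|u|}$, so pigeonholing over $w$ produces, for each $k$, a word $w_k$ with $|w_k|\leq M-|u|$ and $d_{c_{n_k}}(uw_ku)\geq \beta$ for a constant $\beta>0$ independent of $k$. A second pigeonhole over the finitely many possible values of $w_k$ then yields a single word $w^\star$ with $d_{c_{n_k}}(uw^\star u)\geq \beta$ along a subsequence of the $n_k$, and Lemma~\ref{lem:normal} rephrases this as $uw^\star u\in L_\mu(\A)$. The only delicate point is that a short period of $u$ could locally inflate $S_N$ with overlapping occurrences; this is exactly why one must subtract the $2(|u|-1)|S_N|$ overlap term and take $M$ proportional to $|u|/\varepsilon^2$ rather than a universal constant.
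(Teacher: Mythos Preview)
Your proof is correct and takes a genuinely different route from the paper's. The paper argues more directly: every occurrence of $u$ (except near the right boundary) is the left endpoint of some pattern $uwu$ where $w$ contains no copy of $u$, obtained by looking at the next occurrence of $u$ to the right. Splitting according to whether $|w|\leq l$ or $|w|>l$, the density of patterns $uwu$ with $|w|>l$ is bounded by $\frac{2|u|}{2|u|+l}$ (at most $2|u|$ possible starting positions per window of length $2|u|+l$), so choosing $l$ large forces the short-$w$ contribution to have density at least some $\epsilon>0$ along the sequence $(n_i)$. A pigeonhole over the finitely many short $w$'s and then over the times $n_i$ finishes.

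Your second-moment/Cauchy--Schwarz counting of nearby pairs is a standard and robust device that would adapt well to situations where a ``next occurrence'' argument is awkward, but here it is heavier than needed: the paper's gap argument is shorter and sidesteps the block partition and the overlap subtraction entirely. One small point worth making explicit in your write-up is an additional pigeonhole over $N$: for fixed $n_k$, the word produced by pigeonholing the $\alpha N$ pairs depends a priori on $N$, and you need finiteness of the candidate set once more to extract a single $w_k$ that works for infinitely many $N$ before you can conclude $d_{c_{n_k}}(uw_ku)\geq\beta$ via the $\limsup$.
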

\begin{proof}
Let $u\in L_{\mu}(\A)$, there exists $\alpha>0$ and an increasing sequence $(n_i)_{i\in\NN}$ such that $\A^{n_i}\mu([u])>\alpha$. Thus, for a normal configuration $c$, one has $d_{\A^{n_i}(c)}(u)>\alpha$ for all $i\in\NN$. Let $l\in\NN$ and $\epsilon>0$ such that  $\frac{2|u|}{2|u|+l}<\alpha-\epsilon$, we define  
\begin{eqnarray*}
W_1&=&\left\{w\in Q_{A}^\ast : u \textrm{ is not a subword of } w \textrm{ and } |w|\leq l\right\} \textrm{ and}\\
 W_2&=&\left\{w\in Q_{A}^\ast : u\textrm{ is not a subword of }w \textrm{ and } |w|> l\right\}.
\end{eqnarray*}

Consider $uW_ku=\{uwu : w\in W_i\}$ for $k\in\{1,2\}$, one has 
$$d_{\A^{n_i}(c)}(uW_2u)  = \limsup_{n\to\infty}\frac{|\A^{n_i}(c)_{[-n,n]}|_{uW_2u}}{2n+1}\leq\frac{2|u|}{2|u|+l}$$
since a word of $uW_2u$ can appear at most $2|u|$ times for each pattern of length $2|u|+l$ of $\A^{n_i}(c)$. Moreover $d_{\A^{n_i}(c)}(uW_1u) +d_{\A^{n_i}(c)}(uW_2u)\geq d_{\A^{n_i}(c)}(u)$ so $d_{\A^{n_i}(c)}(uW_1u)\geq \epsilon$.

Since $W_1$ is finite, there exists a word $w\in W_1$ such that $d_{\A^{n_i}(c)}(uwu)\geq \epsilon$ for an infinity of $i\in\NN$. Thus $uwu\in L_{\mu}(\A)$. 
\end{proof}

\begin{example}[]
We consider here the ``max'' automaton $\A_M$. The alphabet contains only two states $0$ and $1$. The radius is $1$. When the rule applies to three $0$ (no $1$), it produces a $0$. In any other case, it produces a $1$. 

 The probability to have a $0$ at time $t$ is the probability to have $0^{2t+1}$ on the initial configuration. Which tends to $0$ when $t\to \infty$ for the uniform Bernoulli measure. So, $0$ does not appear in the $\mu$-limit set. And finally $\Lambda_{\mu}(\A_M)=\{^{\infty}1^{\infty}\}$.

  And this example gives a difference between subshifts that can be realised as limit set ($\Lambda(\A)=\bigcap_{i\in\NN}\A^i(Q^{\ZZ})$) and subshifts that can be realised as $\mu$-limit set. Effectively, $\Lambda(\A_M)=(^{\infty}10^*1^{\infty})\bigcup (^{\infty}0^{\infty})\bigcup (^{\infty}01^{\infty})\bigcup (^{\infty}10^{\infty})$, but  if we apply proposition \ref{prop:pasteTwoWords} with the word $01$, we conclude that $\Lambda(\A_M)$ cannot be a $\mu$ limit set.

\end{example}

\section{Counters}\label{sec:counters}

In this section and the following one, we describe an automaton $\A_S$, which, on normal configurations, produces finite segments of size growing with time. In these segments, we will  make computations described in section  \ref{sec:infinite_unions}.

Before starting the computation, the automaton $\A_S$ has a transitory regime which erases the random configuration and generate segments between $\#$ where the computation is done. To do that, we have a special state $\ast$, that can only appear in the initial configuration, and which generates two counters. Between two counters, the states are initialized and when two counters intersect, they compare their respective age. If they do not have the same age, the younger deletes the older one; if they have the same age, they disappear and we put the state $\#$ in order to start the computation. The notion of counters was introduced in~\cite{DPST} to produce equicontinuous points according to arbitrary curves. 

We recall some ideas which allow to construct such automaton:
\begin{itemize}
\item no transition rule produces the state $\ast$;
\item $\ast$ produces two couples of signals, one toward the left and another one toward the right; 
\item a couple of signal (called \emph{counter}) is formed by an \emph{inner} signal and an \emph{outer} signal, which is faster. Their collisions are handled in the following way:
\begin{itemize}
\item nothing other than an outer signal can go through another
outer signal;
\item when two outer signals collide they move through each
other and comparison signals are generated;
\item on each side, a signal moves at maximal speed towards the
inner border of the counter, bounces on it and
goes back to the point of collision;
\item the first signal to come back is the one from the youngest
counter and it then moves back to the outer side of the oldest
counter and deletes it;
\item the comparison signal from the older counter that arrives
afterwards  is deleted and will not delete the younger counter's
outer border;

\end{itemize}
\item between a left counter and a right counter, the configuration is initialized;
\item if two counters that have the same age meet, they disappear and produce the state $\#_S$ which start the computation described in section~\ref{sec:infinite_unions}
\item the state $\#_S$ becomes $\#$ which delimitates segments, this state can disappear if two adjacent segments decide to merge as described in section~\ref{sec:segments}, or if a counter (necessarily younger) encounters it. 
\end{itemize}

\begin{figure}
\begin{center}
\includegraphics[scale=1.5]{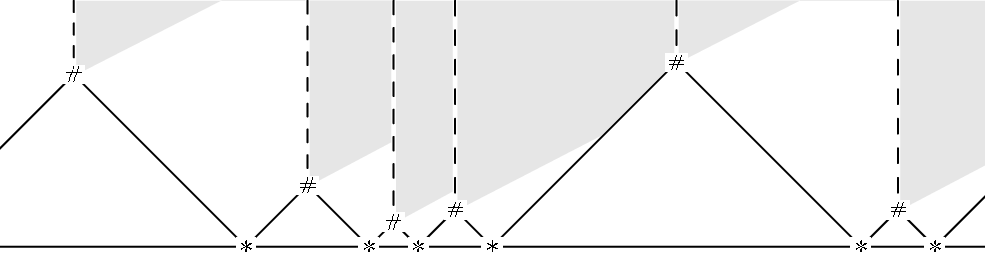}
\caption{When two counters launched by a $\ast$ meet, a $\#$ is produced and a computation is launched on the right. The computation area extends until it meets the inner signal of a counter or another $\#$.}\label{fig:counters}

\end{center}
\end{figure}

The initialization of a configuration is illustrated in figure \ref{fig:counters}. The gray areas of computation begin on the left of a $\#$ produced by the meeting of two counters generated by a $\ast$.

\begin{lemma}\label{lem:timeCounters}
There exists a constant $K_c$ such that if two $\#$ are distant of $k$, they appeared before time $k\times K_c$. 
\end{lemma}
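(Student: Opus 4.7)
The plan is to trace each $\#$ back to the two $\ast$ states in the initial configuration whose counters collided to create it, and then bound the creation time linearly in the distance between those $\ast$s.

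First, I would verify that if a $\#$ appears at position $p$ at time $t$, then its two ``parent'' $\ast$s sit in the initial configuration at positions $a<b$ with $p=(a+b)/2$ and $t\leq C(b-a)$ for some constant $C$ depending only on the outer and inner signal speeds. Since no transition rule produces $\ast$, all $\ast$s lie at time $0$; two sibling outer signals (traveling at speed $s_{\max}$) collide at the midpoint $(a+b)/2$ at time $(b-a)/(2 s_{\max})$. Each comparison signal then travels to the corresponding inner border and bounces back, a process taking additional time proportional to the counter age $(b-a)/(2 s_{\max})$ since the counter width grows linearly in the age. As both counters have the same age (both launched at time $0$), the comparison signals return simultaneously and produce $\#_S$, which becomes $\#$ one step later; all these delays are linear in $b-a$, giving the claimed constant $C$.

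Second, I would observe that the positions of all $\#$s that ever appear form a subset of the midpoints of consecutive $\ast$s. Enumerating the initial $\ast$s in order $\ldots<a_{i-1}<a_i<a_{i+1}<\ldots$, every $\#$ is created at some $m_i=(a_i+a_{i+1})/2$. For two currently present $\#$s at positions $m_i$ and $m_j$ with $i<j$ and $m_j-m_i=k$, monotonicity of $(a_i)$ gives
\begin{displaymath}
k \;=\; \frac{(a_j-a_i)+(a_{j+1}-a_{i+1})}{2} \;\geq\; \max\!\left(\frac{a_{i+1}-a_i}{2},\,\frac{a_{j+1}-a_j}{2}\right).
\end{displaymath}
Hence $a_{i+1}-a_i\leq 2k$ and $a_{j+1}-a_j\leq 2k$, so by the first step both appearance times are at most $2Ck$. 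Setting $K_c=2C$ concludes the proof.

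The main subtlety to handle is that $\#$s lying between $m_i$ and $m_j$ may have been destroyed (via segment mergers or by an encountering counter), so the two $\#$s under consideration need not be adjacent in the enumeration $(m_i)$. However, the argument above is robust to this: such destructions only remove other midpoints from the list of present $\#$s, without altering the positions $m_i,m_j$ themselves or the $\ast$-sequence $(a_i)$, so the inequalities bounding the parent gaps survive intact and the linear bound on appearance times transfers without change.
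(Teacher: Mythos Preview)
Your plan is correct and follows essentially the same idea as the paper. The paper's own proof is much terser: it simply observes that each $\#$ arises from the collision of a left-moving and a right-moving counter, so between any two $\#$s there must lie a $\ast$ in the initial configuration; since that $\ast$ is within distance $k$ of either $\#$, the creation time is bounded by $k\cdot K_c$ where $K_c$ is (the inverse of) the inner-signal speed. Your enumeration of the $\ast$s and the midpoint inequality
\[
k=\frac{(a_j-a_i)+(a_{j+1}-a_{i+1})}{2}\ \geq\ \max\!\left(\frac{a_{i+1}-a_i}{2},\frac{a_{j+1}-a_j}{2}\right)
\]
is just a more explicit way of extracting the same bound, and your remark that intermediate $\#$s may have been destroyed (so $i$ and $j$ need not be consecutive) makes precise something the paper leaves implicit. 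One small point you might add for completeness: a $\#$ already present in the initial configuration trivially ``appeared'' at time $0\leq k\cdot K_c$, which the paper mentions but your plan does not.
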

\begin{proof}
Consider two states $\#$ in the space time diagram separated by $k$ cells. If $\#$ is not in the initial configuration, the only way to appear is to result from the collision of two counters coming from the left and from the right. Thus, in the initial configuration, it is necessary to have the state $\ast$ between the two $\#$ to create the two $\#$. This operation take at most  $k\times K_c$ where $K_c$ is the speed of an inner signal. 
\end{proof}

\section{Merging segments}
\label{sec:segments}
We saw in Section \ref{sec:counters}, how a special state $\ast$ on the initial configuration gave birth to counters protecting everything inside them until they meet some other counter born the same way. In this section, we will describe the evolution of the automaton $\A_S$ after this time of initialization. 
 When two counters of the same age meet, they disappear and  a $\#$ is produced.

\begin{definition}[Segment]
	A \emph{segment} $u$ is a subword of a configuration delimited by two $\#$ and containing  no $\#$ inside. So, $u\in \#\left(Q\setminus\{\#\}\right)^*\#$. The \emph{size} of a segment is the number of cells between both $\#$.
\end{definition}

There will be computations made inside segments, but we will describe it later. Thus, in a segment, there is a layer left for computations that remain inside the both $\#$, and a ``merging layer'' that will contain signals necessary to the behavior with other segments. Every signal presented in this section will travel on this merging layer. The idea is the following: at some times, two neighbor segments will decide to merge together to form one single segment whose size will be the sum of both sizes plus one. And we will assure that each segment will eventually merge, so that no segment of finite size can still be in the $\mu$-limit set of $\A_S$.\\

 When a $\#$ is produced in automaton $\A_S$, it sends two signals, on its right and  on its left to detect the first $\#$ on each side. If the signal catches the inside of a counter still in activity before reaching a $\#$,  it waits until the counter produces a $\#$. Then  both $\#$ have recognised each other and the segment between them is ``conscious''. It launches a computation inside it, and waits until it is achieved. We will assure later that this computation ends. When this is done, it will alternatively send signals to its left and to its right in order to propose successively to each neighbor to merge. \\

 For this purpose, it computes and stores the length $n$ of the segment as a binary representation. Then the segment puts a $L$ mark on its left $\#$, and waits for $n^2$ timesteps. If, during this time, the left side neighbor has not put a $R$ mark on the common $\#$, our segment erases the $L$ mark, a signal is sent on the other side, and it puts a $R$ mark on its right $\#$. It waits once again $n^2$ timesteps before erasing the $R$, sending a signal to its left, and starting over. The whole cycle takes $2(n^2+n)$ timesteps as we consider a signal at speed $1$ crossing a segment of size $n$. We request the signal to stay $n^2$ timesteps because as $(n+1)^2>n^2+2n$, if two segments do not have the same size, their signals eventually meet during a cycle of the smallest one. So, the only case in which two neighbor segments that try to merge do not merge, is when they have same size and are correctly synchronized. Computing and storing $n$, and waiting $n^2$ can be done with a space $\log(n)$.\\

 This process ends when at the same time, both a $L$ and a $R$ mark are written on a $\#$. When this happens, the two segments agree to merge together and they do it: the $\#$ between them is erased, and the whole activity begins again, starting with the computation inside the new segment.\\

\begin{figure}

\begin{center}
\includegraphics[scale=1.8]{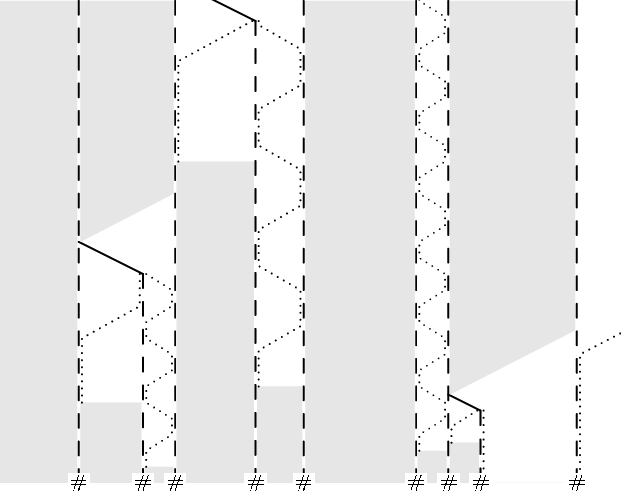}
\caption{A $\#$ stays until two segments merge. Computation happens in gray areas, and at its end a signal (...) is sent and stays on the left of the segment, then goes to the right, stays and comes back. This cycle continues until a neighbor's signal is on the same $\#$ at the same time. Then the $\#$ is deleted and another computation is started on the left.}\label{fig:merging}

\end{center}
\end{figure}

 The general behavior of the segments among themselves is illustrated in figure~\ref{fig:merging}.
We prove the following claims for automaton $\A_S$.

\begin{claim}
\label{lem:diffsize1}
For any two words $u,v\in Q^*$, with $|u|\neq|v|$, if the word $w=\# u\# v\#$ appears at time $t$ in a space-time diagram of $\A_S$, one of the $3$ $\#$ of $w$ has disappeared at time $t+|Q|^{|w|}+2(|w|^2+|w|)$.  
\end{claim}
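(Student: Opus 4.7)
The strategy is to show that the middle $\#$ of $w=\#u\#v\#$ is deleted by a merge between segments $u$ and $v$ within the stated time bound, or else that one of the outer $\#$'s is destroyed first; either outcome satisfies the claim. I split the argument into a computation phase followed by a merge-proposal phase. For the computation phase, inside each segment the activity is confined to at most $|w|$ cells over the alphabet $Q$, so the space of possible interior configurations has size at most $|Q|^{|w|}$. Since $\A_S$ is deterministic and the two $\#$'s act as walls for the interior, within $|Q|^{|w|}$ additional time steps the interior computation either terminates or enters a periodic loop; by the construction of Section~\ref{sec:infinite_unions} it terminates, so by time $t+|Q|^{|w|}$ both $u$ and $v$ have entered their merge-proposal cycles.

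For the merge-proposal phase, assume without loss of generality $|u|<|v|$. A segment of size $n$ alternates between holding an $L$ mark on its left $\#$ for $n^2$ steps, sending a signal of travel time $n$, holding an $R$ mark on its right $\#$ for $n^2$ steps, and sending a return signal of travel time $n$, yielding a period $T_n=2(n^2+n)$. On the shared middle $\#$, $u$ deposits its $R$ mark and $v$ deposits its $L$ mark; the gap between two consecutive $R$-windows of $u$ has length $T_u-|u|^2=|u|^2+2|u|$. The inequality $(|u|+1)^2>|u|^2+2|u|$ combined with $|v|\geq|u|+1$ gives $|v|^2>|u|^2+2|u|$, so $v$'s $L$-window on the middle $\#$, of length $|v|^2$ consecutive steps, strictly exceeds the largest possible gap between $u$'s $R$-windows and therefore must overlap at least one of them. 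At the first such overlap both marks coexist on the middle $\#$, triggering the merge and deleting it. This happens within $T_v\leq 2(|w|^2+|w|)$ steps after both cycles have started, so the total delay is at most $|Q|^{|w|}+2(|w|^2+|w|)$, as required.

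The main obstacle I anticipate is the computation-termination bound: strictly speaking one needs to check that the interior computation in Section~\ref{sec:infinite_unions} does halt (whence by state-counting it halts within $|Q|^{|w|}$ steps) rather than enter a true loop, which is a property of the construction yet to be described; should termination not be automatic for pathological initial interiors, the claim still holds because in that case either a younger counter has already erased one of the outer $\#$'s, or the relevant $\#$ was not part of a genuine conscious segment. The merge-cycle analysis itself is a clean scheduling argument hinging on the strict inequality $(|u|+1)^2>|u|^2+2|u|$ already highlighted in the text preceding the claim.
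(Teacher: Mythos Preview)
Your proof is correct and follows essentially the same two-phase structure as the paper: a state-counting bound of $|Q|^{|w|}$ for the internal computations to finish, followed by the scheduling argument based on $|v|^2>|u|^2+2|u|$ to force an $L$/$R$ overlap on the shared $\#$ within one cycle $T_v\leq 2(|w|^2+|w|)$. Your merge-phase analysis is in fact spelled out more carefully than the paper's (which simply asserts the overlap occurs before the end of $v$'s cycle), and your caveat about computation termination is apt but, as you note, handled by the construction.
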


\proof
If the word $w$ exists at time $t$ on a space time diagram, at time $t+|Q|^{|u|}$ (respectively $t+|Q|^{|v|}$) at most, the computation is achieved in $u$ (resp. $v$). We suppose here that no $\#$ in $w$ has disappeared, which means, $u$ and $v$  do not merge with any other segment outside $w$. So at time $t+|Q|^{|w|}$ both segments try to merge with another one. Assume $|v|> |u|$ for example, the other case is totally symmetric. Then, as $|v|^2>|u|^2+2|u|$, before the end of the cycle of $v$, they have put their mark simultaneously on their common $\#$ for one timestep at least. And consequently, they have merged and one $\#$ has disappeared at time $t+|Q|^{|w|}+2(|w|^2+|w|)$.  
\qed

\begin{claim}
\label{lem:mergetime}
If two segments of size less than $k\in\NN$  merge together, they do it at most $|Q|^{2k}+2((2k)^2+2k)$ timesteps after being formed.
\end{claim}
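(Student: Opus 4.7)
The plan is to split on whether the two merging segments have the same or different sizes, reducing the different-size case to Claim~\ref{lem:diffsize1} and handling the equal-size case by direct analysis of the $L/R$ signaling cycles. I would write the two merging segments as $u$ and $v$, both of size strictly less than $k$, so that the word $w=\#u\#v\#$ has length $|w|\le 2k+1$, and I would call $t_0$ the time at which $w$ first appears, i.e.\ the time at which both segments are simultaneously formed.

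For $|u|\neq|v|$, my key observation is that the hypothesis of the present claim (that $u$ and $v$ merge with each other) is precisely the ``no outer $\#$ of $w$ disappears in the meantime'' supposition that is used in the proof of Claim~\ref{lem:diffsize1}. I would therefore invoke that proof directly: it exhibits the \emph{middle} $\#$ of $w$ as the $\#$ that vanishes, and it does so within $|Q|^{|w|}+2(|w|^2+|w|)$ steps after $t_0$. Bounding $|w|$ by $2k$ (absorbing the additive constant into the exponential) then yields the claimed bound $|Q|^{2k}+2((2k)^2+2k)$.

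For $|u|=|v|=n$, I would first note that each inner computation terminates in at most $|Q|^n\le|Q|^k$ steps after $t_0$, since it is a deterministic process on a tape of size $n$ over alphabet $Q$ and so either halts or loops within that many distinct tape contents. From that moment on, each of the two segments runs the alternating $L/R$ cycle of common length $2(n^2+n)$. Equal cycle lengths freeze the relative phase of $u$'s ``$R$ on the common $\#$'' window and $v$'s ``$L$ on the common $\#$'' window, so these windows either overlap during the very first post-computation cycle or never overlap at all. Since the claim's hypothesis is that the two segments do merge, the overlap must occur in that first cycle, giving a total elapsed time of at most $|Q|^k+2(n^2+n)\le|Q|^{2k}+2((2k)^2+2k)$.

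The main (and rather mild) obstacle is the different-size case: one must verify that the proof of Claim~\ref{lem:diffsize1} really produces the middle $\#$ as the vanishing one (and not one of the outer two), which is exactly where the strict inequality $|v|^2>|u|^2+2|u|$ between the cycle lengths of different-size neighbors is used to force the two marks to coincide on the common $\#$. Once this is pinned down, the rest of the argument is just bookkeeping of word lengths and the standard $|Q|^n$ bound on the time taken by the intra-segment computation.
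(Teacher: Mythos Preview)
Your proposal is correct and follows essentially the same two-case split as the paper's own proof: invoke Claim~\ref{lem:diffsize1} when $|u|\neq|v|$, and for $|u|=|v|$ argue that the computations finish within $|Q|^{2k}$ steps and then the identical cycle lengths force the merge to occur in the first cycle or never. Your write-up is more detailed (in particular on why it is the \emph{middle} $\#$ that vanishes and on the pigeonhole bound for the intra-segment computation), but the structure and the key ideas coincide; note only that the paper is equally informal about the bookkeeping $|w|\le 2k$ versus $2k+1$.
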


\proof
If they don't have the same size, lemma \ref{lem:diffsize1} let us conclude. If they have the same size, their computations are achieved after $|Q|^{2k}$. And as their merging cycle takes the same time for both, if they do not merge during the first cycle, they will never merge. So if they merge, they do it before $|Q|^{2k}+2((2k)^2+2k)$.
\qed

\begin{claim}
\label{lem:diffsize2}
For any two words $u,v\in Q^*$, with $|u|\neq|v|$, the word $w=\# u\# v\#$ does not appear in $\Lambda_{\mu}(\A_S)$.
\end{claim}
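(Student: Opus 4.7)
The plan is to prove $g(n):=\A_S^n\mu([w])\to 0$ as $n\to\infty$; this yields $w\notin L_\mu(\A_S)$, so $w$ is forbidden in $\Lambda_\mu(\A_S)$. Throughout, I work with a normal configuration $c$ and use the identity $g(t)=d_{\A_S^t(c)}(w)$ from Lemma \ref{lem:normal}. Write $T=|Q|^{|w|}+2(|w|^2+|w|)$ for the time constant of Claim \ref{lem:diffsize1}.

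The main step is to extract a stability property of ``$w$-skeletons''. Define
\[
T(t)\;:=\;\{p\in\ZZ\,:\,\A_S^t(c)_{p}=\A_S^t(c)_{p+|u|+1}=\A_S^t(c)_{p+|u|+|v|+2}=\#\}.
\]
I claim $T(t)\subseteq T(s)$ whenever $|w|K_c\leq s\leq t$. Pick $p\in T(t)$ and let $s_1,s_2,s_3\leq t$ be the largest times at which each of the three $\#$s of the skeleton is (re)created and then persists continuously until $t$. Suppose for contradiction $s_1>|w|K_c$. At time $s_1$, Lemma \ref{lem:timeCounters} (applied with $k=|w|$) forbids any other $\#$ within distance $|w|$ of $p$, so the $\#$ at $p+|u|+1$ is absent at $s_1$, forcing $s_2>s_1$. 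Applying the lemma again at $s_2$ (still $>|w|K_c$) then contradicts the continuous presence of the $\#$ at $p$ from $s_1$ to $t$. Hence $\max_i s_i\leq|w|K_c$, each $\#$ of the triple is present at every time in $[|w|K_c,t]$, and $p\in T(s)$.

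Combining this with Claim \ref{lem:diffsize1} yields a narrow ``window of visibility'' at each fixed position. Let $\tau_p:=\min\{s\geq|w|K_c:p\notin T(s)\}$ ($=+\infty$ if no such $s$ exists). If $w$ appears at $p$ at some time $t\geq|w|K_c$, then $p\in T(t)$ gives $t<\tau_p$, while Claim \ref{lem:diffsize1} gives $p\notin T(t+T)$, so $\tau_p\leq t+T$. Hence the set of such $t$ is contained in $[\tau_p-T,\tau_p)$ and has cardinality at most $T$. Summing over starting positions $p$ in $[-k,k]$,
\[
\sum_{t=|w|K_c}^{M}\bigl|\A_S^t(c)_{[-k,k]}\bigr|_w\;\leq\;T(2k+1).
\]
Dividing by $2k+1-|w|$, letting $k\to\infty$ (Lemma \ref{lem:normal} replaces each term by $g(t)$) and then $M\to\infty$ gives $\sum_{t\geq|w|K_c}g(t)\leq T<\infty$, so $g(t)\to 0$.

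The main obstacle is the stability step: Lemma \ref{lem:timeCounters} only controls pairs of $\#$s at a single instant, and on its own does not forbid a $\#$ from disappearing and being recreated at the same position. The inductive use of the lemma on the triple through the ``last appearance'' times $s_i$ is the device that promotes it to the required triple-persistence statement, and is where care is needed. The remaining space-time counting is then routine.
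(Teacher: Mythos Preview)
Your argument is correct, but it takes a genuinely different route from the paper's. The paper proves the much stronger statement that $P^t_{\A_S}(w)=\emptyset$ for every $t$ beyond an explicit bound $T'=|w|K_c+|w|\bigl(|Q|^{|w|}+2(|w|^2+|w|)\bigr)$: once the two outer $\#$s are present, all intermediate $\#$s have been created by time $|w|K_c$ (Lemma~\ref{lem:timeCounters}), and then Claim~\ref{lem:mergetime} forces them to disappear one by one until only the configuration $\#u'\#v'\#$ with $|u'|\neq|v'|$ remains, at which point Claim~\ref{lem:diffsize1} kills one of the three $\#$s. So the paper obtains $g(t)=0$ eventually, not merely $g(t)\to 0$, and in particular never needs Lemma~\ref{lem:normal} or any density computation. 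Your space--time counting argument, by contrast, avoids Claim~\ref{lem:mergetime} entirely: stability of the skeleton plus a single application of Claim~\ref{lem:diffsize1} gives a uniform bound $T$ on the number of time steps at which $w$ can sit at a fixed position, and summability of $g$ follows.

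One remark on the stability step. In $\A_S$ every $\#$ is created exactly once (it arises from a unique collision of same-age counters emanating from adjacent $\ast$s in the initial configuration) and can thereafter only disappear; this is implicit in the construction and in the proof of Lemma~\ref{lem:timeCounters}. Granting that, your inclusion $T(t)\subseteq T(s)$ is immediate: at time $t$ the three $\#$s are pairwise at distance $\leq |w|$, so Lemma~\ref{lem:timeCounters} gives each creation time $\leq |w|K_c$, and persistence to $t$ does the rest. Your inductive device with the times $s_i$ is designed to handle hypothetical recreation, but if recreation were genuinely possible the device would not be sound as written, since Lemma~\ref{lem:timeCounters} only bounds the (unique) creation time and says nothing about a ``last'' one. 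So the stability step is fine, but for the simple reason rather than the elaborate one.
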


\proof
We use the constant $K_c$ from lemma \ref{lem:timeCounters}. Denote \\ $T=|w|\times K_c+|w|\left(|Q|^{|w|}+2(|w|^2+|w|)\right)$. We prove that for $t> T$, $P^t_{\A_S}(w)=\emptyset$. \\
 If the two $\#$ encircling $w$ never disappear, the dynamic inside $w$ is not affected by the exterior. Through time, some other $\#$ possibly appeared and disappeared between them. But after time at most $|w|\times K_c$, they have all appeared. Since then, they will only disappear. There are less  than $|w|-1$ excedentary $\#$ that have to disappear. Considering lemma \ref{lem:mergetime}, one disappears at least every $|Q|^{|w|}+2(|w|^2+|w|)$ timesteps. After that, the two segments of $w$ are formed, and with lemma \ref{lem:diffsize1}, one of the $\#$ of $w$ disappear before $|Q|^{|w|}+2(|w|^2+|w|)$ new timesteps. Finally, at time $T$, one of the $\#$ of $w$ has disappeared and $P^t_{\A_S}(w)=\emptyset$.  
\qed

\begin{prop}
\label{prop:nosharp}
There is no $\#$ in the $\mu$-limit set of $\A_S$.

\end{prop}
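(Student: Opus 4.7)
The plan is a short contradiction argument that cascades Proposition~\ref{prop:pasteTwoWords} twice and then exploits the fact that Claim~\ref{lem:diffsize2} places no restriction on the words $u,v$ other than $|u|\neq|v|$; in particular they are allowed to contain the symbol $\#$.

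Assume for contradiction that $\# \in L_\mu(\A_S)$. First I would apply Proposition~\ref{prop:pasteTwoWords} to the word $\#$, producing a word $w_1 \in Q^*$ such that $\#w_1\# \in L_\mu(\A_S)$. Next, applying the same proposition to the longer word $\#w_1\#$ yields a word $w_2 \in Q^*$ with $\#w_1\#w_2\#w_1\# \in L_\mu(\A_S)$.

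The final step is a simple reparsing. Setting $a = w_1$ and $b = w_2\#w_1$, both in $Q^*$ since $Q$ contains the letter $\#$, one has the literal equality $\#w_1\#w_2\#w_1\# = \#a\#b\#$. Their lengths satisfy $|a|=|w_1|$ and $|b|=|w_1|+|w_2|+1>|a|$, so $|a|\neq|b|$, and Claim~\ref{lem:diffsize2} then forces $\#a\#b\#\notin L_\mu(\A_S)$, contradicting the conclusion of the second application of the proposition.

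The main, and really only, subtlety is spotting this reparsing: one must notice that $b$ is allowed to contain a $\#$ under the hypothesis of Claim~\ref{lem:diffsize2}. Without exploiting this extra generality, one would be forced to control directly the length of the word $w_2$ produced by Proposition~\ref{prop:pasteTwoWords}, and the proposition offers no such a priori control, since it only guarantees existence of \emph{some} pasting word.
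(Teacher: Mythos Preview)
Your argument has a real gap at the reparsing step. Although the \emph{statement} of Claim~\ref{lem:diffsize2} is written for arbitrary $u,v\in Q^{\ast}$, its \emph{proof} (and that of Claim~\ref{lem:diffsize1} on which it rests) treats $u$ and $v$ as the contents of genuine segments: note the phrases ``the computation is achieved in $u$'' and ``the two segments of $w$ are formed'', and the use of the merging mechanism, which compares the lengths of the two actual $\#$-free blocks on either side of the central $\#$. When your $b=w_2\#w_1$ contains a $\#$, the pattern $\#a\#b\#$ has (at least) three inner segments rather than two, and the argument that the central $\#$ must disappear no longer applies. Concretely, nothing in the paper rules out, at this stage, a persisting block of the form $\#s_1\#s_2\#s_3\#$ with $|s_1|=|s_2|=|s_3|$; Claim~\ref{lem:diffsize1} is silent here because every pair of neighbouring segments has equal length. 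So the ``extra generality'' you invoke is not actually established, and in fact the general statement you need is essentially equivalent to Proposition~\ref{prop:nosharp} itself.

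This is exactly why the paper's proof does more work. It does not simply iterate Proposition~\ref{prop:pasteTwoWords}: it uses the Bernoulli structure of $\mu$ to force, inside a persistent pair of $\#$'s at distance $k$, a further pattern $\ast(Q\setminus\ast)^{2k}\ast(Q\setminus\ast)^{2k}\ast$ to occur with positive probability at a shifted position. This guarantees a $\#$-free stretch of length $>2k$ between two persistent $\#$-blocks of inner length $k$, so that in the resulting word $\#u\#w\#u\#$ one can find two \emph{adjacent} genuine segments of different lengths and apply Claim~\ref{lem:diffsize2} in the case its proof actually covers. Your shortcut avoids this construction, but only by leaning on a form of Claim~\ref{lem:diffsize2} that is not proved.
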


\proof

Assume that $\#\in L_{\mu}(\A)$, by Proposition~\ref{prop:pasteTwoWords}, there exits $u\in Q^{\ast}$ such that $\# u\#\in L_{\mu}(\A)$, we can assume that $u$ does not contain $\#$. Let $k=|u|$, by Lemma~\ref{lem:timeCounters}, the $\#$ encircling $u$ appeared before time $k \times K_c$. Denote $W=\{\#v\# : v\in(Q\setminus\{\#\})^k\}$ and $X_n=\{x\in Q^{\ZZ} : \A^k(x)_{[0,k+1]}\in W \textrm{ for all } k\in[k\times K_c,n]\}$. Since $\# u\#\in L_{\mu}(\A)$, there exists $\alpha>0$ such that $\mu(X_n)>\alpha$ for an infinity of $n\in\NN$. Moreover, as $X_{n+1}\subset X_n$, we can conclude that $\mu(X_{\infty})>\alpha$ where $X_{\infty}=\cap_{n\in\infty}X_N$. 

As $\mu$ is Bernoulli, we have $\mu(Y)>0$ where $Y=[\ast(Q\smallsetminus \ast)^{2k}\ast(Q\smallsetminus \ast)^{2k}\ast]_0$; moreover there exist $k_1\geq 0$ and $k_2\geq k_1+4k+1$ such that $\mu(Z)>0$ where $Z=X_{\infty}\cap\sigma^{-k_1}(Y)\cap\sigma^{-k_2}(X_{\infty})$. For all $n\geq k\times K_c$ one has $F^n(Z)_{[0,k+1]}\subset W$, $F^n(Z)_{[k_2,k_2+k+1]}\subset W$      and $F^n(Z)_{[k_1+k,k_1+3k]}\subset F^n(Y)_{[k_1+k,k_1+3k]}$ does not contain $\#$. 

We deduce that there exists a word $w\in Q$ of length $k_2-k-1$ such that $w_{[k_1+k,k_1+3k]}$ does not contain $\#$ and $\#u\#w\#u\#\in L_{\mu}(\A)$. However, in $\#u\#w\#u\#$ we can find two segments $\#u_1\#u_2\#$ which have different length. By Claim~\ref{lem:diffsize2} we obtain a contradiction. Thus, there is no $\#$ in the $\mu$-limit set of $\A_S$.
\qed

Finally, we prove a lemma that will be useful later.

\begin{claim}
\label{cla:outofsegments}
The density of cells outside segments generated by counters born in the initial configuration tends to $0$.
\end{claim}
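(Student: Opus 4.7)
The plan is to exploit the positive density $1/|Q|$ of the state $\ast$ in a normal configuration: a typical cell is surrounded by enough $\ast$'s that the counters they generate rapidly enclose it between two $\#$'s, and once a cell lies inside a segment it stays in one forever.

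Fix $\epsilon > 0$ and a normal initial configuration $c$. For every $L \in \NN$, call a cell $z \in \ZZ$ \emph{good} if $c$ contains at least two occurrences of $\ast$ in $\{z-L,\dots,z-1\}$ and at least two in $\{z+1,\dots,z+L\}$. This is a condition on the finite subword $c_{[z-L,z+L]}$, so by normality of $c$ the density of bad cells equals the $\mu$-probability of the corresponding cylinder event, which is bounded by $2\bigl[(1-\tfrac{1}{|Q|})^L + L\cdot\tfrac{1}{|Q|}(1-\tfrac{1}{|Q|})^{L-1}\bigr]\to 0$ as $L\to+\infty$. Choose $L$ so that the bad density is below $\epsilon$, and set $T := 2L\cdot K_c$, where $K_c$ is the constant of Lemma~\ref{lem:timeCounters}.

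For a good $z$, let $p' < p \leq z$ denote the two $\ast$'s of $c$ nearest to $z$ on the left, both at distance at most $L$. Both emit a counter at time $0$; the right-going outer signal of $\ast_{p'}$ and the left-going outer signal of $\ast_p$ travel at maximal speed towards each other and meet somewhere in $[p',p]$ with equal age. By the protocol of Section~\ref{sec:counters} they then mutually disappear and produce a $\#_S$, hence a $\#$ in $[p',p]$ and in particular to the left of $z$, all before time $T$. The symmetric argument on the right of $z$ yields a $\#$ to the right of $z$ before time $T$. Once $z$ is sandwiched between two $\#$'s it remains in some segment forever: a $\#$ can be removed only through a merge with an adjacent segment (which keeps the two outer $\#$'s flanking $z$ untouched) or through a collision with a \emph{younger} counter, and since every counter produced by an initial $\ast$ is born at time $0$, none of them is younger than a $\#$ that was itself produced by a pair of such counters.

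Consequently, for every $t \geq T$, the cells of $\A_S^t(c)$ lying outside any segment are contained in the bad set of $c$, whose density is below $\epsilon$. Letting $\epsilon\to 0$ yields the claim. The delicate point I expect to have to justify carefully is the second sentence of the previous paragraph: intervening outer counters, coming from $\ast$'s located further out, must be shown not to destroy the two relevant signals before their collision, which is exactly what the age-comparison mechanism of Section~\ref{sec:counters} guarantees, since two such interloping counters sharing the same birth date can only either annihilate each other into a harmless $\#$ or pass each other without erasing the signals we rely on.
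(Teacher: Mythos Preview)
Your argument is correct and follows the same idea as the paper: a cell that is not eventually enclosed in a segment must lack nearby $\ast$'s in the initial configuration, and such cells have vanishing density in a normal configuration. The paper's own proof is a single sentence (``such a cell needs predecessors without states $\ast$ on each side in the initial configuration''); you have simply written out the details, including the clean sufficient condition of two $\ast$'s on each side within distance $L$ to guarantee a $\#$ strictly on both sides of $z$, and the observation that all counters share the same birth date so no $\#$ created this way can be destroyed except by merging.
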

\proof
The proof is clear since such a cell needs predecessors without states $\ast$ on each side in the initial configuration.
\qed

\begin{lemma}
\label{lem:densityinside}
 Let $u\in Q^*$. If $\forall k,\forall l\geq k$, for any segment $v\in Q^l$, $d_v(u)\leq \alpha_{k}$ with $\alpha_k\to 0$ when $k\to \infty$, then $u\notin L_{\mu}(\A_S)$.\\
 Conversely, if $\forall k,\forall l\geq k$, for any segment $v\in Q^l$, $d_v(u)\geq \alpha_{k}$ with $\alpha_k\nrightarrow 0$ when $k\to \infty$, then $u\in L_{\mu}(\A_S)$.
\end{lemma}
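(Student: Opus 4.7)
The plan is to use Lemma~\ref{lem:normal} to rewrite everything in terms of the density $d_{\A_S^n(c)}(u)$ for a fixed normal configuration $c$, and then decompose each occurrence of $u$ at time $n$ according to whether it lies inside a ``large'' segment (size $\geq k$), a ``small'' segment (size $<k$), or outside every segment. First I would reduce to the case where $u$ itself contains no $\#$: if it does, Proposition~\ref{prop:nosharp} combined with Lemma~\ref{lem:normal} gives $d_{\A_S^n(c)}(u)\leq d_{\A_S^n(c)}(\#)\to 0$, so $u\notin L_\mu(\A_S)$, and the converse hypothesis is vacuous since segments by definition contain no $\#$. Under this reduction, every occurrence of $u$ in $\A_S^n(c)$ lies inside a single maximal run of non-$\#$ cells, which is either a whole segment or a region of cells outside all segments.

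The core auxiliary step is: for each fixed $k$, the density in $\A_S^n(c)$ of cells lying in some segment of size $\leq k$ tends to $0$ as $n\to\infty$. Indeed, each such cell is within distance $k$ of some $\#$, so this density is bounded by $(2k+1)\,d_{\A_S^n(c)}(\#)$, which vanishes by Proposition~\ref{prop:nosharp} and Lemma~\ref{lem:normal}. Together with Claim~\ref{cla:outofsegments}, this shows that for each fixed $k$, the density of cells lying inside segments of size $\geq k$ tends to $1$ with $n$. This is really the only non-routine point of the proof.

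For the direct implication, fix $k$ and split $d_{\A_S^n(c)}(u)$ into contributions from large segments, small segments, and cells outside any segment. The hypothesis $d_v(u)\leq\alpha_k$ forces $|v|_u\leq \alpha_k|v|$ on each large segment, so the large-segment contribution is bounded by $\alpha_k$; the other two contributions are bounded by the densities of their supporting cells, which both tend to $0$. Hence $\limsup_n d_{\A_S^n(c)}(u)\leq\alpha_k$, and letting $k\to\infty$ yields $d_{\A_S^n(c)}(u)\to 0$, so $u\notin L_\mu(\A_S)$ via Lemma~\ref{lem:normal}.

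The converse runs symmetrically. For $k\geq 2|u|$, the hypothesis gives $|v|_u\geq \alpha_k(|v|-|u|)\geq (\alpha_k/2)|v|$ on every large segment $v$, so the large-segment contribution alone is at least $(\alpha_k/2)(1-o(1))$ as $n\to\infty$. Since $\alpha_k\nrightarrow 0$, one can pick $\beta>0$ and some $k\geq 2|u|$ with $\alpha_k\geq\beta$; for this $k$, $\liminf_n d_{\A_S^n(c)}(u)\geq \beta/2>0$, so $u\in L_\mu(\A_S)$. The main obstacle in the whole argument is the auxiliary small-segment estimate above; everything else is bookkeeping, including the minor discrepancy between $|v|_u/|v|$ and the paper's definition $d_v(u)=|v|_u/(|v|-|u|)$, absorbed by taking $k\geq 2|u|$.
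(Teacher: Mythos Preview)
Your proposal is correct and follows essentially the same route as the paper: fix a normal configuration via Lemma~\ref{lem:normal}, decompose occurrences of $u$ into those inside large segments, small segments, and outside all segments, then use Proposition~\ref{prop:nosharp} and Claim~\ref{cla:outofsegments} to kill the latter two contributions. Your write-up is in fact slightly more explicit than the paper's (the bound $(2k+1)\,d_{\A_S^n(c)}(\#)$ for small-segment cells, the reduction to $\#$-free $u$, and the $|v|$ versus $|v|-|u|$ bookkeeping), but the structure is identical.
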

\proof
Let's consider a normal configuration $c$. For any $k\in \NN$, we denote \begin{displaymath}
d_k^t=\sum_{v\in \#(Q_{\A})^l\#,\ l\leq k}l\times d_{\A_S^t(c)}(v)
\end{displaymath} the density of cells in segments of size less than $k$ in the image at time $t$ of $c$. Due to proposition \ref{prop:nosharp}, $d_k^t\to 0$ when $t\to \infty$. And due to claim \ref{cla:outofsegments}, the density $a^t$ of cells outside wellformed  segments tends to $0$ when $t\to \infty$.

 Suppose $\forall k,\forall l\geq k,\forall v\in Q^l$ segment, $d_v(u)\leq \alpha_{k}$ and $\alpha_k\to 0$. Any occurence of $u$ is either in a segment of size less than $k$, either in a segment of size greater than $k$, or out of segments. Finally, at a given time $t$, $d_{\A_S^t(c)}(u)\leq d_k^t + \alpha_k+ a^t$.

 As this equation holds for any $k$, finally, when $t\to \infty$, $d_{\A_S^t(c)}(u)$ has a limit which is $0$. This concludes the proof of the first part of the lemma with lemma \ref{lem:normal}. \\

In the other side, suppose $\forall k,\forall l\geq k,\forall v\in Q^l$ segment, $d_v(u)\geq \alpha_{k}$ and $\alpha_k\nrightarrow 0$.   Therefore,  $d_{\A_S^t(c)}(u)\geq (1-d_{k}^t-a^t)\alpha_k$ which does not tend to $0$ when $t\to \infty$ and $k\to \infty$. Thus, $u\in L_{\mu}(\A_S)$.
\qed

\section{Infinite Unions} % (fold)
\label{sec:infinite_unions}
In this section we will see how to create a CA whose $\mu$-limit set is the closure of the infinite union of a recursively enumerable family of particular subshifts.

\begin{definition}[Generable Subshift]
	We say that a Turing machine $M$ \emph{generates} a subshift $\Si\subseteq Q^\ZZ$ if $M$ computes a generic configuration of $\Si$ in the following sense:
	\begin{itemize}
		\item the tape alphabet of $M$ contains $Q$;
		\item on an empty tape, $M$ writes the right half of a configuration $c\in\Si$ such that $\limsup_{n\to\infty}\frac{|c_{[0,n]}|_u}{n+1}>0$ if and only if $u\in\Lang(\Si)$; $c$ is called a \emph{generic configuration};
		\item after a symbol of $Q$ has been written on the tape, it is never changed.
	\end{itemize}
\end{definition}

\begin{theorem}
	Given a recursively enumerable family $(\Si_i)_{i\in\NN}$ of generable subshifts, that is to say that there exists a Turing machine that enumerates a set of machines $(M_i)_i\in\NN$ such that $M_i$ generates the subshift $\Si_i$, there exists a cellular automaton $\A$ whose $\mu$-limit set is exactly the subshift $\overline{\bigcup_{i\in\NN}\Si_i}$.
\end{theorem}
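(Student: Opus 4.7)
The plan is to build $\A$ by adding a computation layer to the segment-management automaton $\A_S$ of Sections~\ref{sec:counters} and~\ref{sec:segments}. In a segment of size $s$, once the segment is established, the layer computes $f(s)=\lfloor\log\log s\rfloor$, enumerates the pairs $(i,j)$ with $1\le i,j\le f(s)$, and allocates to each pair a contiguous sub-region of size $\lceil 2^{-(i+j)}s\rceil$, so that the total length used stays bounded by $s$. In the $(i,j)$ sub-region it simulates $M_i$ long enough to obtain the length-$j$ prefix $w_{i,j}$ of the generic configuration of $\Si_i$, and then tiles the sub-region with copies of $w_{i,j}$. Computational markers live on an auxiliary track and are erased once the tiling is written. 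The choice of $f(s)$ guarantees that every sub-region is large enough to hold at least one copy of $w_{i,j}$ when $s$ is large, and that the whole computation completes in time $o(s^2)$, strictly before the first merging cycle of Section~\ref{sec:segments} could complete; the segment then hands control back to the merging rules.

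Forward inclusion $\overline{\bigcup_{i\in\NN}\Si_i}\subseteq\Lambda_{\mu}(\A)$: pick $u\in\Lang(\Si_{i_0})$. The generable hypothesis provides a generic configuration $g$ of $\Si_{i_0}$ with $\beta:=\limsup_n|g_{[0,n-1]}|_u/n>0$, hence some $j^\ast$ with $|w_{i_0,j^\ast}|_u/j^\ast\ge\beta/2$. For every $s$ large enough that $i_0+j^\ast\le f(s)$ and the sub-region $(i_0,j^\ast)$ holds many copies of $w_{i_0,j^\ast}$, the density of $u$ inside this sub-region is at least $\beta/4$ (the tile-seam loss being $O(|u|/j^\ast)$, negligible compared with $\beta/4$). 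Therefore every such segment sees $u$ with density at least $\alpha_u:=2^{-(i_0+j^\ast+2)}\beta$, a positive constant independent of $s$, so the second part of Lemma~\ref{lem:densityinside} yields $u\in L_{\mu}(\A)$. Taking the closure over all $u$ and all $i_0$ gives the inclusion.

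Reverse inclusion $\Lambda_{\mu}(\A)\subseteq\overline{\bigcup_{i\in\NN}\Si_i}$: let $u\notin\Lang(\Si_i)$ for every $i$ and let $c$ be a normal configuration. Any occurrence of $u$ in $\A^t(c)$ must lie in one of the following places, each of vanishing density as $s$ (and hence $t$) goes to infinity: outside every segment (Claim~\ref{cla:outofsegments}); within distance $|u|$ of a $\#$ (Proposition~\ref{prop:nosharp}); straddling a sub-region boundary inside a segment of size $s$, contributing density $O(|u|f(s)^2/s)$; straddling one of the tile seams of length $j$ inside an $(i,j)$ sub-region, contributing $O(|u|/j)$ in that sub-region and $O(|u|2^{-(i+j)}/j)$ in the full segment; or on an auxiliary computation cell, of density $o(1)$ since the computation lasts $o(s^2)$ while each merging cycle lasts $\Theta(s^2)$. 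In the bulk of each tiled sub-region the content is a subword of the generic configuration of $\Si_i$, which forces all patterns to lie in $\Lang(\Si_i)$ and hence to exclude $u$. Summing the vanishing contributions yields a bound $\alpha_k\to 0$ on the density of $u$ in every segment of size $\ge k$, and the first part of Lemma~\ref{lem:densityinside} gives $u\notin L_{\mu}(\A)$.

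The main obstacle is the uniform lower bound required for $u\in L_{\mu}(\A)$: the generable hypothesis only offers $\limsup>0$ density of $u$ in the generic configuration, so a single long prefix can have arbitrarily low density for infinitely many lengths. The dyadic weighting $2^{-(i+j)}$ overcomes this: once a good pair $(i_0,j^\ast)$ witnessing the $\limsup$ is fixed, the sub-region labeled by that pair contributes a constant fraction $2^{-(i_0+j^\ast)}$ of every sufficiently large segment, regardless of $s$, and tiling with $w_{i_0,j^\ast}$ guarantees density at least $\beta/4$ inside it. The remaining technicalities---termination of the computation in $o(s^2)$, vanishing of seam and sub-region-boundary contributions, treatment of auxiliary states and of patterns that appear at tile seams but in no $\Lang(\Si_i)$, and compatibility with the merging rules of Section~\ref{sec:segments}---reduce to routine bookkeeping once this weighting is in place and the recursively enumerable family $(M_i)_{i\in\NN}$ is simulated on the fly inside each segment.
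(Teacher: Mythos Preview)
Your construction differs from the paper's in a real way: the paper writes in each segment a \emph{single} growing prefix $w_i$ per subshift (computed in space $\log(n)/k$) and weights only in $i$, whereas you keep, for every pair $(i,j)$ with $j\le f(s)$, a sub-region of fixed dyadic fraction $2^{-(i+j)}$ tiled by the length-$j$ prefix $w_{i,j}$. You are right that the $\limsup$ in the definition of ``generable'' is an obstacle for a single-growing-prefix scheme, and fixing a witness pair $(i_0,j^\ast)$ that keeps a constant fraction of every large segment is a clean way to obtain the forward inclusion.

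The reverse inclusion, however, does not go through. Take the family consisting of the single golden-mean shift $\Si_1$ (the word $11$ is forbidden) generated by a configuration beginning with the letter $1$. Then $w_{1,1}=1$, and the $(1,1)$ sub-region of every segment of size $s$ has length $\lceil s/4\rceil$ and is filled with $11\cdots 1$. Hence the forbidden word $11$ occurs with density at least $\tfrac14-o(1)$ in \emph{every} large segment, so Lemma~\ref{lem:densityinside} gives $11\in L_\mu(\A)$, contradicting $\Lambda_\mu(\A)\subseteq\Si_1$. More generally, your seam estimate $\sum_{i,j}|u|\,2^{-(i+j)}/j$ is a positive constant independent of $s$, not an $o(1)$ term as you assert; inserting separators does not help, since the separator then has positive limiting density for the same reason. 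The paper avoids this by letting every $w_i$ grow with the segment (so the $\$_2$ density tends to $0$, Claim~\ref{cla:computationstates}); your fix for the $\limsup$ issue undoes precisely this, because it pins sub-regions tiled by arbitrarily short prefixes at a constant fraction of the segment. A correct argument must make every tile length tend to infinity with $s$ while still guaranteeing that some tile realises the $\limsup$ density of each $u$---for instance by letting $j$ range only over a slowly growing window and using that the $\limsup$ is attained along infinitely many lengths.
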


\proof % (fold)
	Let us consider a recursively enumerable family $(\Si_i)_{i\in\NN}$ of generable subshifts, let us denote by $M$ the Turing machine that enumerates the machines $(M_i)_{i\in\NN}$ such that $M_i$ generates the subshift $\Si_i$.
	
	We now describe the behavior of such a cellular automaton $\A$. $\A$ will work as the automaton $\A_S$ described in Section \ref{sec:segments}: starting from a normal configuration, it will generate ``counter signals'' that will produce finite segments on the configuration (separated by a $\#$ symbol). We now describe the computation performed by each finite segment during the evolution of the cellular automaton.
	
	The first thing a segment does is compute its length $n$ and store it as a binary number. By incrementing a binary counter moving across the segment, this is easily done in space $\log(n)$. Once this is done, the segment can simulate Turing machines on its first $\log(n)$ cells (it is important to limit the computational space so that the computation states become negligible and disappear from the $\mu$-limit set).
	
	On the initial $\log(n)$ cells of the segment the machine $M$ is simulated to produce the descriptions of the first $k$ machines $(M_i)_{i<k}$, with $k$ as big as possible for $M$ computing on a tape of length $log(n)$. And we also request that  $k\leq \log(\log(n))$. $k$ may be 0 for short segments, but we know that as the segments grow larger, $k$ will grow too.
	
	The space of size $\log(n)$ is further divided into $k$ fragments of size $\log(n)/k$. On the $i$-th fragment, the corresponding machine $M_i$ is simulated to produce the word $w_i$ beeing the begining of the generic configuration corresponding to the subshift $\Si_i$. The word $w_i$ might be much smaller than $\log(n)/k$ depending on the space needed by the machine $M_i$ to compute, but again we know that as segments grow larger, larger words will be computed.
	
	After the $k$ different $w_i$ have been computed, the initial segment of length $n$ is split into $\sqrt n$ fragments of length $\sqrt n$. Each of these fragments is filled with copies of one of the $w_i$ in the following manner: one out of two is filled with $w_1$, one out of four (i.e. one out of two among the remaining fragments) is filled with $w_2$, one out of eight is filled with $w_3$ and so on. The remaining segments (if $k$ is very small, we might run out of $w_i$ before filling all the fragments) are filled with $w_k$.
	Fragments are separated by a symbol $\$_1\notin Q$ and the copies of words $w_i$ inside a given fragment are separated by a symbol $\$_2\notin Q$.

	\begin{remark}
		The previous construction can be done using only $\log(n)$ cells of computation at each step (cells that are not active and that only contain a symbol from $Q\cup\{\$_1, \$_2\}$ are not counted). To fill the fragments of size $\sqrt n$ we only need to compute the binary expression of $\sqrt n$ and then advance through the segment while filling the fragment with the appropriate $w_i$ while decreasing a counter to measure $\sqrt n$ cells. The important data (the words $w_i$ and different counters) are moved through the segment so that they are always present near the location to be filled. Thus the head of the Turing machine $M$ carries only $\log(n)$ cells used to store the $w_i$ and to its computation. No mark of the computation remains in the other cells, even those already visited and rewritten.
	\end{remark}
	
	When all the fragments of the segment have been filled with the $w_i$, the segment can erase all the remaining computation data and start the process of merging with its neighbors as described in Section \ref{sec:segments}.
	
	When two segments merge, the whole computation is restarted but this time with a larger space. The segments are not erased immediately after a merge, but rather the new data overwrites the previous as the $\sqrt n$ fragments are filled.

 We will prove that $L_{\mu}(\A)=\bigcup_{i\in\NN}\Lang(\Si_i)$.

\begin{claim}
\label{cla:computationstates}
The states used for computation, signals inside segments, writing fragments, $\$_1$ or $\$_2$ do not appear in $\Lambda_{\mu}(\A)$.
\end{claim}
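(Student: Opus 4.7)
The plan is to invoke the natural analogue of Lemma~\ref{lem:densityinside} for the automaton $\A$ (the segment dynamics is inherited verbatim from $\A_S$, so the proof goes through unchanged): for each state $u$ listed in the claim, it suffices to exhibit, for every $l$, a bound $\alpha_l$ on the density $d_v(u)$ inside any segment $v$ of size at least $l$ with $\alpha_l \to 0$. Contributions from cells outside wellformed segments and from short segments are already handled by Proposition~\ref{prop:nosharp} and Claim~\ref{cla:outofsegments}, so the whole task reduces to density estimates inside a single large segment.

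Fix a segment of size $n$. For computation states and auxiliary signals (heads of the simulated Turing machines, binary counters for $n$ and $\sqrt{n}$, signals scanning for the next fragment, synchronization and merging signals), the construction was specifically designed so that the entire computing apparatus fits in $O(\log n)$ cells at any time step: the simulation of $M$ and of the $M_i$'s takes place on a tape of length $\log n$ that travels across the segment, and each moving signal occupies only $O(1)$ cells. Hence at every moment during the segment's life, the density of such auxiliary states inside the segment is at most $O((\log n)/n)$, which tends to $0$.

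For $\$_1$: the segment is split into exactly $\sqrt{n}$ fragments by $\$_1$ markers, giving a density of $O(1/\sqrt{n}) \to 0$. For $\$_2$: within each fragment of size $\sqrt{n}$, the $\$_2$ markers separate consecutive copies of some $w_i$, so each fragment contains at most $\sqrt{n}/|w_i|$ occurrences of $\$_2$. Because the space $\log(n)/k$ allotted to $M_i$ grows to infinity with $n$ (since $k\le\log\log n$) and $M_i$ eventually writes new symbols of $Q$ on its tape, $|w_i|\to\infty$ as $n\to\infty$ for every fixed $i$; summing over the geometric distribution of fragments among the $M_i$'s then yields a total $\$_2$-density bounded by $O(1/\min_{i<k}|w_i|)\to 0$.

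The main subtlety I expect is verifying that the $O(\log n)$ bound on the size of the auxiliary zone really holds at every time step during the segment's entire life, and not only once the stationary regime of filled fragments has been reached. This requires a routine but careful phase-by-phase inspection of the construction (initial length computation, simulation of $M$, simulation of the $M_i$, fragment filling, and the merging-signal cycle of Section~\ref{sec:segments}), checking that in each phase the moving head/tape and the handful of traveling signals never spread over more than $O(\log n)$ cells simultaneously.
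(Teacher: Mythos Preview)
Your overall approach matches the paper's: invoke Lemma~\ref{lem:densityinside} and bound, for each auxiliary state, its density inside an arbitrary segment of size $n$. Your estimates for the computation/writing zone ($O(\log n)/n$), the finitely many merging signals ($O(1)/n$), and the $\$_1$ markers ($O(1/\sqrt n)$) are exactly the paper's.

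The $\$_2$ estimate, however, has a gap. You correctly note that for every \emph{fixed} index $i$ one has $|w_i|\to\infty$ as $n\to\infty$, and that the total $\$_2$-density is governed by a sum of the shape $\sum_{i<k}2^{-(i+1)}/|w_i|$. You then bound this by $O(1/\min_{i<k}|w_i|)$ and claim it tends to $0$. But $k=k(n)$ grows with $n$ (up to $\log\log n$), so the minimum is taken over a growing family of machines; a freshly included $M_{k-1}$ may well output only a handful of symbols on the space $\log(n)/k$ it is given, and nothing in the construction forces $\min_{i<k(n)}|w_i|\to\infty$. The pointwise convergence $|w_i|\to\infty$ for each fixed $i$ does not imply this uniform statement.

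The paper's argument---and the one your phrase ``summing over the geometric distribution'' is reaching for---is to fix a threshold $\lambda$, wait until $n$ is large enough that $|w_i|>\lambda$ for all $i\le\lambda$, and split: the fragments with $i\le\lambda$ contribute $\$_2$-density at most $1/\lambda$, while the fragments with $i>\lambda$ occupy at most a $1/2^{\lambda}$ fraction of the segment regardless of how short their $w_i$ are. Hence the $\$_2$-density is below $1/\lambda+1/2^{\lambda}$, and letting $\lambda\to\infty$ finishes it. Replacing your crude minimum bound by this two-part split closes the gap.
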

\proof
 Here we use the lemma \ref{lem:densityinside} for each of these states.

 We use the $\log(k)$ initial cells of a segment of size $k$ to do the computation, so  the density of these cells is $\log(k)/k$, and the property is proved.
 The head of the Turing machine $M$ carries at most $\log(k)$ cells for its computation or writing, thus the same argument works. The signals for the merging process are in a finite number in a segment, therefore their density in a segment tends to $0$ too. The density of $\$_1$ is $\sqrt{k}/k$, and the lemma applies once again. 

 For the density of $\$_2$, let $\lambda>0$, $\exists k_0>0$ such that  the word $w_i$ produced in  a segment of size $k>k_0$ is such that $|w_i|>\lambda$ for any $i\leq\lambda$. So, for $k>k_0$, the density of $\$_2$ in a segment of size $k$ is less than $1/\lambda$ in fragments of $S_i, i\leq\lambda$ and less than $1$ in the other fragments that have themselves a density lower than $1/2^{\lambda}$. And thus, the density of $\$_2$ is lower than $\frac{1}{\lambda}+\frac{1}{2^{\lambda}}$ in segments of size $k>k_0$. Finally the density of $\$_2$ tends to $0$ when  $k\to \infty$. And the claim is proved.
\qed

\begin{claim}
\label{cla:worddensity}
For any subshift $\Si_i,\ i\in \NN$, any word $u\in \Lang(\Si_i)$ and any family of segments $(v_k)_k$ of size $|v_k|=k$, $d_{v_k}(u)$ does not tend to $0$ when $k\to \infty$.
\end{claim}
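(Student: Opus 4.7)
The strategy is to lower bound $d_{v_k}(u)$ by the contribution of the fragments of $v_k$ tiled with copies of the prefix of the generic configuration produced by $M_i$, and then to show this prefix achieves sufficient density of $u$ for infinitely many values of $k$. Fix $i$ and $u \in \Lang(\Si_i)$, and let $c$ denote the generic configuration that $M_i$ writes on an empty tape. By the generable subshift hypothesis,
\[
\beta := \limsup_{L\to\infty} \frac{|c_{[0,L-1]}|_u}{L} > 0,
\]
so we may choose an increasing sequence $L_1 < L_2 < \cdots$ with $|c_{[0,L_j-1]}|_u \ge (\beta/2) L_j$ for every $j$.

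For $k \ge 2^{2^{i+1}}$, the machine $M_i$ is among the $k' \le \log\log k$ machines simulated inside $v_k$; denote by $w_i(k)$ the prefix of $c$ that $M_i$ produces in its allotted workspace of size $\log(k)/k'$, and set $\ell(k) := |w_i(k)|$. According to the filling procedure, a fraction $1/2^i$ of the $\sqrt k$ fragments of size $\sqrt k$ composing $v_k$ is tiled with copies of $w_i(k)$ separated by $\$_2$; each such fragment thus contains $\lfloor \sqrt k/(\ell(k)+1)\rfloor$ copies of $w_i(k)$. Counting only the occurrences of $u$ sitting inside a single copy yields
\[
d_{v_k}(u) \;\ge\; \frac{1}{2^i}\cdot\frac{|w_i(k)|_u}{\ell(k)+1}\,(1-o(1)),
\]
where the $o(1)$ absorbs the truncated last copy in each fragment and the $\$_2$ separators.

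It remains to show this lower bound does not collapse. The key fact is that the definition of generable forbids $M_i$ from overwriting any symbol of $Q$ it has already written, so outputting the $(j+1)$-th symbol of $c$ requires strictly more tape cells than outputting only the first $j$; hence the map sending a space budget $s$ to the number of symbols of $c$ that $M_i$ can output in space $s$ is non-decreasing with jumps of exactly $1$ and attains every sufficiently large integer as $s$ grows. As $k$ ranges over segment sizes, the workspace $\log(k)/k'$ is unbounded but not monotone, since $k'$ itself increments by $1$ at each threshold $k = 2^{2^m}$ causing a downward drop; nevertheless, between two consecutive drops the workspace sweeps the full integer interval $[\lceil 2^m/m\rceil,\,\lfloor 2^{m+1}/m\rfloor]$, and these intervals cover every sufficiently large integer. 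Consequently $\ell(k)$ attains every sufficiently large integer value, and in particular for each $j$ large enough there is $k_j$ with $\ell(k_j) = L_j$. For such $k_j$, the bound $|w_i(k_j)|_u = |c_{[0,L_j-1]}|_u \ge (\beta/2) L_j$ yields
\[
d_{v_{k_j}}(u) \;\ge\; \frac{1}{2^i}\cdot\frac{(\beta/2) L_j}{L_j+1}\,(1-o(1)) \;\ge\; \frac{\beta}{2^{i+2}}
\]
for $j$ large enough, so $\limsup_k d_{v_k}(u) \ge \beta/2^{i+2} > 0$ and $d_{v_k}(u) \nrightarrow 0$.

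The main obstacle is establishing that $\ell(k)$ attains every sufficiently large integer value: the never-overwrite clause in the definition of generable gives strict space-monotonicity of $M_i$ in its output, but one must still check that the integer values attained by the workspace $\log(k)/k'$ cover every large integer despite the downward jumps at each threshold $2^{2^m}$, which is where the very slow growth $k' \le \log\log k$ is essential.
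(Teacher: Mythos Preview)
Your route differs from the paper's. The paper passes directly from ``$u$ has positive density $\alpha(u)$ in the generic configuration $c$'' to the existence of $l_i$ such that \emph{every} subword of $c$ of length at least $l_i$ carries $u$ with density at least $\alpha(u)/2$; with this uniform bound in hand, the exact value of $|w_i(k)|$ is irrelevant once it exceeds $l_i$, and the lower bound on $d_{v_k}(u)$ holds for \emph{all} large $k$. You instead keep only the $\limsup$ hypothesis literally as stated in the definition of generable, and try to manufacture a subsequence $(k_j)$ with $|w_i(k_j)|=L_j$ landing exactly on a prescribed good prefix length.

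The gap is the assertion that ``outputting the $(j+1)$-th symbol of $c$ requires strictly more tape cells than outputting only the first $j$.'' The never-overwrite clause freezes cells that already hold a symbol of $Q$, but it does not prevent $M_i$ from using cells to the right of the current output as scratch space; when one additional cell allows a long auxiliary computation to finish, the machine may then write many new output symbols into cells that were already inside its workspace. Hence the map $s\mapsto\ell(s)$ is non-decreasing with $\ell(s)\le s$, but its increments can be arbitrarily large, and $\ell$ may skip every $L_j$. Even granting your argument that the workspace $\lfloor\log(k)/k'\rfloor$ sweeps all large integers, composing with $\ell$ gives no guarantee that any $L_j$ is ever realized as $|w_i(k)|$, so the subsequence $(k_j)$ need not exist. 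The paper's uniform-density shortcut is precisely what sidesteps this issue; if one insists on the bare $\limsup$ hypothesis, an argument of a different kind is needed.
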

\proof
As $u\in \Lang(\Si_i)$, its density $\alpha(u)$ in the generic configuration computed by $M_i$ is positive. So, there exists $l_i\in \NN$ such that any subword of this configuration contains $u$ with density at least $\alpha(u)/2$. Let $k_0$ such that in any segment of size $k>k_0$, the word $w_i$ computed has length $|w_i|>l_i$. 

 For any segment $v_k$ of size $k>k_0$, there are $\log(k)$ cells occupied for computation, less than $\sqrt(k)$ cells containing a $\$_1$ and $\frac{1}{2^{i+1}}$ among the remaining cells attributed to the copies of $w_i$. Among these copies, a proportion $\frac{l_i-1}{l_i}$ of the cells contain $\$_2$. $\log(k)$ additional cells can be dedicated to the head of the Turing machine $M$ writing in the segment and a finite number $K$ of cells can contain signals for the merging process. Finally, 
\begin{displaymath}
d_{v_k}(u)\geq \left(\left(\frac{k-\log(k)-\sqrt(k)}{2^{i+1}}\right)\frac{l_i-1}{l_i}-\log(k)-K\right)\frac{1}{k}.
\end{displaymath}
Which does not tend to $0$ when $k\to \infty$. 
\qed

\begin{claim}
\label{cla:wordsinmulimit}
For any subshift $\Si_i,\ i\in \NN$ and any word $u\in \Lang(\Si_i)$, $u\in L_{\mu}(\A)$.
\end{claim}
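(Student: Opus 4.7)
The plan is to deduce the claim directly from Claim \ref{cla:worddensity} via the converse direction of Lemma \ref{lem:densityinside}. Fix $i \in \NN$ and $u \in \Lang(\Si_i)$. Inspecting the proof of Claim \ref{cla:worddensity}, one sees that it actually yields a uniform lower bound on the density of $u$ in a segment in terms of the segment's size alone: there exist constants $l_i$, $K$ and $k_0$ (depending on $u$ and $i$) such that every segment $v$ of size $k > k_0$ satisfies
\[
d_v(u) \;\geq\; \beta_k \;:=\; \left(\left(\frac{k-\log(k)-\sqrt{k}}{2^{i+1}}\right)\frac{l_i-1}{l_i}-\log(k)-K\right)\frac{1}{k},
\]
and $\beta_k \to \frac{l_i-1}{l_i\, 2^{i+1}} > 0$ as $k \to \infty$.

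I would then set $\alpha_k := 0$ for $k \leq k_0$ and $\alpha_k := \inf_{l \geq k} \beta_l$ for $k > k_0$. This sequence is monotone nondecreasing with limit $\frac{l_i-1}{l_i\, 2^{i+1}} > 0$, so in particular $\alpha_k \nrightarrow 0$. Moreover, for every $k \in \NN$, every $l \geq k$, and every segment $v \in Q^l$, the inequality $d_v(u) \geq \alpha_k$ holds: it is trivial when $l \leq k_0$ (since then $\alpha_k = 0$) and follows from $d_v(u) \geq \beta_l \geq \alpha_k$ otherwise. The hypothesis of the converse direction of Lemma \ref{lem:densityinside} is therefore satisfied, and I conclude $u \in L_\mu(\A)$.

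The only potentially delicate point is that Claim \ref{cla:worddensity} is phrased in terms of families of segments, whereas the argument requires a bound that holds uniformly over all segments of a given size. This uniformity is guaranteed by the deterministic nature of the construction: a segment of size $n$ always executes the same computation (compute $n$, run $M$ to enumerate the first few machines $M_j$, run each $M_j$ to produce $w_j$, and fill the $\sqrt{n}$ fragments in the prescribed deterministic pattern). Consequently the content of the segment, and in particular the density $d_v(u)$, depends only on $n$, so the bound obtained in Claim \ref{cla:worddensity} is indeed uniform in the required sense, and the argument above goes through.
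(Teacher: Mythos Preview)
Your proof is correct and follows exactly the route of the paper, which simply states that the claim follows by combining Claim~\ref{cla:worddensity} with the converse direction of Lemma~\ref{lem:densityinside}. Your version is more detailed, explicitly constructing the sequence $(\alpha_k)$ and justifying the uniformity of the lower bound over all segments of a given size, but the underlying argument is the same.
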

\proof
We clearly get the result by combining claim \ref{cla:worddensity} and lemma \ref{lem:densityinside}. 
\qed

\vspace{1cm}

Finally, the theorem is proven:
\begin{itemize}
\item the proposition \ref{prop:nosharp} and the claim \ref{cla:computationstates} assure that every state used for computation does not appear in  $\Lambda_{\mu}(\A)$, which means $L_{\mu}(\A)\subseteq \bigcup_{i\in\NN}\Lang(\Si_i)$,
\item the claim \ref{cla:wordsinmulimit} assures that $\bigcup_{i\in\NN}\Lang(\Si_i)\subseteq L_{\mu}(\A)$.
\qed % (end)
\end{itemize}
% section infinite_unions (end)

The next proposition gives some examples of generable subshifts.
\begin{proposition}
The following subshifts are generable: 
\begin{itemize}
\item transitive sofic subshifts,
\item substitutive subshift associated to a primitive substitution.
\end{itemize}
\end{proposition}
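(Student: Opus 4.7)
The plan is to treat the two classes separately and, in each case, produce a Turing machine that writes a right-infinite generic configuration left to right, never rewriting a $Q$-symbol once output.

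For a transitive sofic subshift $\Si$, I would fix an irreducible (strongly connected) labelled finite graph $M$ presenting $\Si$, so that right-infinite paths in $M$ have as labels exactly the right halves of configurations in $\Si$. Using $M$ one can, algorithmically and by breadth-first search, enumerate $\Lang(\Si)$ as a sequence $(u_n)_{n\geq 1}$ and, given any current vertex $q$ of $M$ together with a target word $v\in\Lang(\Si)$, compute a word $\omega$ of bounded length such that the path starting from $q$ with label $\omega v$ exists in $M$ (such $\omega$ exists because $M$ is strongly connected). The construction then proceeds by stages: at stage $n$ the machine has output a prefix $P_{n-1}$ reaching vertex $q_{n-1}$ in $M$; it then extends $P_{n-1}$ by a block $B_n$ obtained by repeatedly gluing copies of $u_n$ via such connectors, the number of copies chosen so that $|B_n|\geq 2|P_{n-1}|$ and so that $u_n$ occupies at least a fixed positive fraction of $B_n$. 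The resulting word $c=\lim_n P_n$ is the label of an infinite path in $M$ and hence lies in $\Si$, and for each $n$ the choice of $|B_n|$ forces $\limsup_{N}\frac{|c_{[0,N]}|_{u_n}}{N+1}\geq \eta_n>0$, so $c$ is generic.

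For a substitutive subshift $\Si_s$ with primitive substitution $s:Q\to Q^{\ast}$, I would first select $a\in Q$ and $k\geq 1$ such that $s^k(a)$ begins with the letter $a$; such a pair exists since the map sending $b\in Q$ to the first letter of $s(b)$ is eventually periodic on the finite set $Q$. Then each $s^{kn}(a)$ is a prefix of $s^{k(n+1)}(a)$, so the sequence stabilises to a right-infinite word $c\in\Si_s$. A Turing machine produces $c$ by maintaining the current iterate $s^{kn}(a)$ in an auxiliary work region (where arbitrary rewriting is allowed) and copying newly stabilised $Q$-symbols into the output region in order. For genericity, take $u\in\Lang(\Si_s)$, so $u$ appears in $s^m(b)$ for some $m$ and $b$; by primitivity there exists $\ell_0$ such that $b$ appears in $s^{\ell_0}(b')$ for all $b'\in Q$, hence $u$ appears in every $s^{m+\ell_0}(b')$. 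For $kn>m+\ell_0$, the block decomposition $s^{kn}(a)=s^{m+\ell_0}(z_1)\cdots s^{m+\ell_0}(z_p)$ with $z_1\cdots z_p=s^{kn-m-\ell_0}(a)$ gives $|s^{kn}(a)|_u\geq p$ and $|s^{kn}(a)|\leq T p$, where $T=\max_{b'\in Q}|s^{m+\ell_0}(b')|$, hence $|s^{kn}(a)|_u/|s^{kn}(a)|\geq 1/T$ uniformly in $n$, so $u$ has positive upper density in $c$.

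I expect the only genuine difficulty to be the interaction between the write-once output discipline and the requirement of positive upper density for every word of the language. In the sofic case this is resolved by the doubling schedule $|B_n|\geq 2|P_{n-1}|$ on block sizes, which forces each newly inserted word to dominate everything written so far; in the substitutive case it is taken care of by the stability of long enough prefixes under iteration of a primitive substitution. The remaining content of each argument reduces to standard breadth-first search on a finite automaton and to elementary combinatorics of primitive substitutions.
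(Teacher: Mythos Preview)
Your proposal is correct. For the substitutive case it is essentially the paper's argument: both of you take the fixed point of (a power of) the substitution and appeal to the uniform recurrence coming from primitivity; the paper simply cites \cite{Fog05} for the positive-density fact, whereas you prove it by the standard block decomposition, and you are more careful than the paper in first passing to a power $s^k$ so that $s^k(a)$ genuinely begins with $a$.

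For the transitive sofic case the high-level idea is the same---exploit a strongly connected presentation---but the concrete constructions differ. The paper writes, in order, all cycles of the automaton of length $1$, then all of length $2$, and so on, and asserts without further argument that every word of $\Lang(\Si)$ then has positive density. You instead enumerate $\Lang(\Si)$ and glue successive words along short connecting paths, using the doubling schedule $|B_n|\ge 2|P_{n-1}|$ to force $\limsup$-density of each $u_n$ to be positive. Your route is slightly longer to describe but has the advantage that the positive-upper-density claim is immediate from the schedule, whereas in the paper's enumeration-of-cycles approach one still has to check (and it is not entirely obvious) that a fixed word does not get swamped as the cycle lengths grow. Either construction respects the write-once output discipline.
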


\proof
As a transitive sofic subshift $\Si$ is given by the strongly connected automaton recognizing its language. For example, we can write successively every cycle of size $k$ for $k$ from $1$ to $\infty$. In this case we obtain a configuration where the density of all the words of the language of $\Si$ is positive. 

For a primitive substitution $s$, it is easy to generate the fix point configuration denoted $c_{[0;\infty]}$ whose all prefixes are given by $s^k(a)$ for all $k\in\NN$ where $a\in Q$. It is well know that all words of the substitutive subshift associated appears with a positive density in $c_{[0;\infty]}$~\cite{Fog05}.  
\qed

\section{Conclusion and perspectives}

In this paper, we prove that a large class of subshifts can be realized as
$\mu$-limit sets of cellular automata. In particular, it is possible to
obtain all transitive sofic subshifts, this is a profound difference with the
topological case since the even shift cannot be realized as the limit set
of one cellular automaton. This construction allows to control the
iterations of a random configuration in view to obtain an auto-organized
behavior. The construction can be adapted at least in two ways:
\begin{itemize}
\item to obtain the same result for a large class of measure ($\sigma$-ergodic
measure of full support) modulo some technical changes
\item to obtain a subshift without any word of low complexity (as suggested by
V. Poupet).
\end{itemize}

Of course the main open question is in the reciprocal of the theorem,
that is to say to characterize  subshifts that can possibly be realized as
$\mu$-limit sets.
\section*{Acknowledgments}
We are deeply grateful to Victor Poupet and Guillaume Theyssier for their ideas, and constant support to the writing of this article.

\def\ocirc#1{\ifmmode\setbox0=\hbox{$#1$}\dimen0=\ht0 \advance\dimen0
  by1pt\rlap{\hbox to\wd0{\hss\raise\dimen0
  \hbox{\hskip.2em$\scriptscriptstyle\circ$}\hss}}#1\else {\accent"17 #1}\fi}


\begin{thebibliography}{DPST10}

\bibitem[BPT06]{Boyer-Poupet-Theyssier-2006}
Laurent Boyer, Victor Poupet, and Guillaume Theyssier.
\newblock {On the Complexity of Limit Sets of Cellular Automata Associated with
  Probability Measures}.
\newblock {\em MFCS 2006}, LNCS 4162:190--201, 2006.

\bibitem[DPST10]{DPST}
Martin Delacourt, Victor Poupet, Mathieu Sablik, and Guillaume Theyssier.
\newblock {Directional Dynamics along Arbitrary Curves in Cellular Automata}.
\newblock {\em Theoretical Computer Science}, A paraître, 2010.

\bibitem[Fog05]{Fog05}
N. Pytheas Fogg.
\newblock {Substitutions in Dynamics, Arithmetics and Combinatorics.}
\newblock {V. Berthé, S. Ferenczi, C. Mauduit, A. Siegel (Eds), 2005.}

\bibitem[FK07]{Formenti-Kurka-2007}
Enrico Formenti and Petr K{\ocirc{u}}rka.
\newblock {A Search Algorithm for the Maximal Attractor of a Cellular
  Automaton}.
\newblock {\em STACS, 2007}, pages 356--366, 2007.

\bibitem[Hur87]{Hurd-1987}
Lyman~P. Hurd.
\newblock {Formal Language Characterizations of Cellular Automata Limit Sets}.
\newblock {\em Complex Systems}, 1:69--80, 1987.

\bibitem[Hur90a]{Hurley-1990-1}
Mike Hurley.
\newblock Attractors in cellular automata.
\newblock {\em Ergodic Theory Dynam. Systems}, 10(1):131--140, 1990.

\bibitem[Hur90b]{Hurley-1990-2}
Mike Hurley.
\newblock Ergodic aspects of cellular automata.
\newblock {\em Ergodic Theory Dynam. Systems}, 10(4):671--685, 1990.

\bibitem[Kar92]{Kari-1992}
Jarkko Kari.
\newblock {The Nilpotency Problem of One-Dimensional Cellular Automata}.
\newblock {\em SIAM J. Comput.}, 21(3):571--586, 1992.

\bibitem[Kar94]{Kari-1994}
Jarkko Kari.
\newblock {Rice’s Theorem for the Limit Sets of Cellular Automata}.
\newblock {\em Theor. Comput. Sci.}, 127(2):229--254, 1994.

\bibitem[KM00]{KurMaa}
Petr K{\ocirc{u}}rka and Alejandro Maass.
\newblock {Limit sets of cellular automata associated to probability measures}.
\newblock {\em Journal of Statistical Physics}, 100(5):1031--1047, 2000.

\bibitem[K{\ocirc{u}}r03]{Kur}
Petr K{\ocirc{u}}rka.
\newblock {\em {Topological and symbolic dynamics}}.
\newblock Soci{\'e}t{\'e} Math{\'e}matique de France, Paris, 2003.

\bibitem[Maa95]{Maass-1995}
Alejandro Maass.
\newblock {On the sofic limit sets of cellular automata}.
\newblock {\em Ergodic Theory Dynam. Systems}, 15:663--684, 1995.

\end{thebibliography}
\end{document}